\documentclass[a4paper]{article}
\usepackage[margin=1.0in]{geometry}

\usepackage{authblk}
\usepackage{setspace}
\usepackage{array}
\usepackage{makecell}
\usepackage{amsmath, amssymb, booktabs, mathtools}
\usepackage{subfig}
\usepackage[font=small,labelfont=bf]{caption}
\usepackage{threeparttable}
\usepackage{float}
\usepackage[square, sort&compress, numbers]{natbib}
\usepackage{bm}
\usepackage{stmaryrd}
\usepackage{multirow}
\usepackage{enumitem}

\usepackage{lipsum}
\usepackage{orcidlink}

\usepackage{amsthm}
\newtheorem{proposition}{Proposition}[section]
\newtheorem{remark}{Remark}[section]

\usepackage[ruled]{algorithm2e}
\SetKwComment{Comment}{/* }{ */}
\usepackage[hang,flushmargin]{footmisc}
\makeatletter
\newcommand{\algorithmfootnote}[2][\footnotesize]{%
  \let\old@algocf@finish\@algocf@finish  
  \def\@algocf@finish{\old@algocf@finish  
    \leavevmode\rlap{\begin{minipage}{\linewidth}
    #1#2
    \end{minipage}}%
  }%
}

\usepackage{graphicx} 
\usepackage{hyperref} 
\usepackage[nameinlink,capitalize]{cleveref}
\hypersetup{
  colorlinks   = true,
  urlcolor     = blue,
  linkcolor    = blue,
  citecolor    = blue
}
\newcommand{\correspondingA}{*}

\title{Exceptional-point conjugate symmetry and migration in fluid-loaded elastic waveguides: restructuring the physical dispersion spectrum}

\author[]{Dong Xiao\textsuperscript{*}\orcidlink{0009-0006-7609-7832}}
\author[]{Zahra Sharif Khodaei\orcidlink{0000-0001-5106-2197}}
\author[]{M.H. Aliabadi\orcidlink{0000-0002-2883-2461}} 

\affil[]{\normalsize \slshape  Department of Aeronautics, Imperial College London, South Kensington, London SW7 2AZ, United Kingdom.}
\date{\vspace*{-2\baselineskip}}  
\date{\vspace*{-1.5cm}}

\begin{document}
\maketitle
\footnotetext[0]{ \textsuperscript{\correspondingA}Corresponding author}
\footnotetext[1]{Email addresses: d.xiao21@imperial.ac.uk (D. Xiao); z.sharif-khodaei@imperial.ac.uk (Z. Sharif-Khodaei); m.h.aliabadi@imperial.ac.uk (M.H. Aliabadi)}

\renewcommand{\abstractname}{Abstract}
\begin{abstract}{\normalsize \onehalfspacing
The dispersion spectrum of a fluid-loaded elastic waveguide is a non-Hermitian system with eigenmodes on a two-sheeted Riemann manifold. Although exceptional points (EPs) organize avoided crossings, the topological rules by which fluid loading restructures the observable spectrum—both physically admissible modes and their connectivity—remain unknown, and conventional solvers fail to recover complete branches. Here we establish these rules. We prove that, for real elastic moduli and real fluid parameters, EPs obey a conjugate-pair symmetry, and that a pair on the physical sheet enforces an avoided crossing of real wavenumbers between the two modes it controls, yielding a topological criterion for mode identification. We then identify two independent reconstruction mechanisms. First, the physical observation space expands continuously: a mode observable above its vacuum cut-off can extend downward to the sound line, admitting solutions in a frequency band empty in vacuum; vacuum-anchored seeds above cut-offs continued downward capture such branches systematically. Second, EP migration rewires mode connectivity: as fluid density increases, conjugate pairs leaving the physical sheet—signaled by a sign reversal of the imaginary part of either the in-plane or the vertical wavenumber—leave the two previously coupled physical branch segments unconnected by any EP within the observation space; they become independent curves that may intersect freely on the real frequency axis, with the narrowest veerings losing their topological protection first. Mirror-symmetry breaking births additional EPs generically in the trapped regime but only conditionally in the leaky regime. Numerical computations on symmetric and asymmetric composite laminates under single- and double-sided water loading validate the framework, recovering multiple leaky branches missed by standard solvers.}
\end{abstract}

\renewcommand{\abstractname}{Keywords}
\begin{abstract}{
Fluid-loaded waveguide; Leaky Lamb waves; Scholte wave; Exceptional points; Inter-manifold transport; Non-Hermitian dispersion.}
\end{abstract}

\newpage

\section{Introduction}
\label{sec:introduction}

The interaction between an elastic waveguide and an unbounded acoustic medium constitutes one of the canonical open mechanical systems in solid mechanics.
When an elastic plate is brought into contact with a fluid, the continuous radiation degrees of freedom introduce an effective non-Hermitian self-energy that renders the guided-wave eigenvalue problem non-self-adjoint: the wavenumbers become complex, encoding radiation damping, and the spectrum acquires a nontrivial global topology. The \emph{physical} spectrum of such a system is defined by two distinct elements: its \emph{membership}---which branches are physically observable---and its \emph{topology}---how these branches connect and exchange identity along the real frequency axis. Fluid loading restructures both, and the rules governing this restructuring are the subject of the present paper.

The non-Hermitian structure of the problem is encoded in the fluid radiation condition. The vertical wavenumber $k_z = \sqrt{\omega^2/c_f^2 - k^2}$ appearing in the acoustic impedance endows the complex wavenumber plane with a natural two-sheeted structure, distinguished by the Sommerfeld condition \cite{schot_eighty_1992}: the \emph{outgoing sheet} ($\Im(k_z)>0$) carries energy away from the plate, and the \emph{incoming sheet} ($\Im(k_z)<0$) is physically inadmissible. On the real $(\omega,k)$ plane, the singular locus $k_z=0$ reduces to the \emph{sound line} $\omega = c_f k$, separating subsonic from supersonic phase velocities. On the outgoing sheet, three classes of solutions coexist: trapped modes, leaky modes, and improper modes---solutions that algebraically satisfy the dispersion relation but violate the requirement of outward energy transport.
The \emph{physical sector} is defined by the joint conditions of causality ($\Im(k)\ge 0$) and proper radiation ($\Im(k_z)>0$ with $v_p>c_f$ for leaky waves), and thus comprises only trapped and leaky modes. Although this sheet structure has been recognized for decades \cite{dayal_leaky_1989,rokhlin_topology_1989}, existing approaches to dispersion analysis treat it primarily as a \emph{numerical} inconvenience rather than a \emph{physical} object whose evolution under loading governs the observable spectrum.

Two persistent puzzles expose the absence of governing rules. The first concerns \emph{spectral membership}. Conventional solvers are known to miss physical branches systematically: higher-order leaky branches that terminate at the sound line with finite phase velocity are absent from standard outputs \cite{pavlakovic_leaky_1998,lowe_matrix_1995}; trapped Scholte waves that lack a clear vacuum counterpart may also be difficult to recover systematically. There is no internal consistency check to detect such omissions. More fundamentally, a generally accepted definition of a \emph{complete} physical spectrum has not been established for an open waveguide: if the observation space itself deforms under fluid loading, completeness relative to what reference can hardly be asserted. The second puzzle concerns \emph{spectral topology}. The physical spectrum of a fluid-loaded waveguide is non-Hermitian, and the rules that govern how its branches connect and exchange identity under loading---that is, its spectral topology---remain unknown. Numerical experiments on fluid-loaded laminates reveal that even weak loading---a small fraction of the target fluid density---can erase certain vacuum avoided crossings: the two branches, no longer coupled by an EP within the physical sector, become independent and may intersect freely on the real axis with uncorrelated imaginary parts, while wider veerings in the same spectrum survive intact. This selectivity is a symptom of the missing rules, not the puzzle itself. While perturbation theories exist for mode veering in conservative Hermitian systems \cite{xiao_mode_2026}, no framework describes the spectral topology of this open, non-Hermitian setting. These two puzzles---the expanding observation space and the unknown spectral topology---motivate the present work.

Over the past several decades, a rich methodological literature has developed to compute dispersion curves in fluid-loaded waveguides. Classical impedance--admittance formalisms provide local perturbative descriptions \cite{shuvalov_theory_2002,shuvalov_analysis_2006}, while matrix techniques---the Transfer Matrix Method \cite{thomson_transmission_1950}, the Global Matrix Method \cite{knopoff_matrix_1964}, and the Stiffness Matrix Method \cite{wang_stable_2001}---underpin widely used software packages \cite{pavlakovic_disperse_1997,huber_dispersion_2024}. The Semi-Analytical Finite Element (SAFE) method reduces the cross-sectional problem to a nonlinear eigenvalue problem for complex wavenumbers \cite{gavric_computation_1995,castaings_finite_2008}, extended to piezoelectric and multi-material waveguides \cite{bartoli_modeling_2006,marzani_semi-analytical_2008,mazzotti_coupled_2013}, and recent direct solvers exploit the analytic structure of the radiation condition to avoid spatial truncation \cite{kiefer_calculating_2019,georgiades_leaky_2022,gravenkamp_computation_2025,gravenkamp_computing_2026}. These methods differ greatly in numerics but share one limitation: they output eigenvalues at isolated frequencies without topological guidance. Branch sorting, tracing, and identification are left to post-hoc inspection, which becomes unreliable precisely where it matters most---near spectral degeneracies, and near the sound line where the observation space closes. Their systematic failures are therefore not a matter of numerical precision but of missing topological knowledge.

In parallel, exceptional-point physics has become a vibrant frontier of non-Hermitian solid mechanics. EPs have been demonstrated in PT-symmetric elastic media with balanced gain and loss \cite{rosa_exceptional_2021}, in passive elastodynamic systems with differential viscoelastic loss \cite{gupta_requisites_2023}, and in layered media through pressure--shear coupling \cite{lustig_anomalous_2019} and frozen-mode degeneracies \cite{fishman_third-order_2024}.
In open elastic systems, radiation loss provides an intrinsic non-Hermiticity: EPs have been identified in cylindrical \cite{matsushima_exceptional_2023} and spherical \cite{deguchi_observation_2025} scatterers, and dynamic encirclement of EPs has been exploited for chiral mode conversion in anisotropic elastic media \cite{duan_exceptional_2026}. A critical disconnect remains, however. These works concern engineered non-Hermiticity, scattering poles in the complex frequency plane, or EP migration induced by \emph{deliberate parameter paths} as a means of mode manipulation. By contrast, the fluid-loaded elastic waveguide is a physically intrinsic non-Hermitian system whose EPs are degeneracies of the guided-wave spectrum on the dispersion manifold---and the question of how EPs migrate along this manifold \emph{as a direct consequence of fluid loading}, and how this migration restructures the observable spectrum, has never been posed, let alone answered.

In this work, we establish the exceptional-point principles that govern the physical spectrum of fluid-loaded elastic waveguides. We first prove a conjugate-pair symmetry of exceptional points for real elastic moduli and real fluid parameters, and derive its observable signature on the real frequency axis. Building on this symmetry, we identify two independent mechanisms by which fluid loading restructures the physical spectrum: a continuous expansion of the physical observation space, and a migration of exceptional points across the admissibility boundaries of the dispersion manifold. We further analyze the conditions under which mirror-symmetry breaking creates new exceptional points. These principles are implemented in a vacuum-anchored computational framework that combines density homotopy, real-frequency continuation, and topological validation. The framework is validated on symmetric and asymmetric composite laminates under single- and double-sided water loading, including direct comparison with a widely used open-source solver. Our results recast the spectrum of an open elastic waveguide as a loading-dependent topological object, and identify the failure of conventional dispersion solvers as topological rather than numerical in origin. The present work, together with its companion studies on mode veering in conservative elastic waveguides \cite{xiao_mode_2026} and homotopy-continuation strategies for viscoelastic waveguides \cite{xiao_homotopy_2026}, forms a systematic account of the spectral topology of elastic waveguides: the conservative Hermitian limit, the methodology of inter-manifold transport, and the geometric non-Hermitian physics of fluid-loaded systems.

The remainder of this paper is organized as follows. \Cref{sec:spectral_geometry} defines the spectral geometry and the physical observation space.
\Cref{sec:EP_topology} proves the conjugate-symmetry theorem and its observable signature. \Cref{sec:observation_space,sec:migration} establish the two restructuring mechanisms. \Cref{sec:computation} presents the computational framework, \cref{sec:validation} the laminate validations, and \cref{sec:conclusion} concludes.

\section{Spectral geometry of the fluid-loaded elastic waveguide}
\label{sec:spectral_geometry}

This section establishes the geometric stage on which the entire analysis takes place.
We first recall the governing equations and the origin of non-Hermiticity (\cref{sec:governing}), then define the physical sector of the dispersion manifold through the admissibility conditions (\cref{sec:outgoing_sheet}), and finally introduce the \emph{physical observation space}---the projection of the admissible spectrum onto the experimentally accessible plane---whose deformation under fluid loading (\cref{sec:observation_space}) and whose internal connectivity (\cref{sec:migration}) constitute the two restructuring mechanisms studied in this work.

\subsection{Governing equations and geometric non-Hermiticity}
\label{sec:governing}

Consider an elastic plate of infinite extent in the $x$--$y$ plane and finite thickness $h$ along $z$, composed of homogeneous anisotropic laminas with real symmetric stiffness tensors $\mathbf{C}$ (purely elastic response; viscous damping is deliberately excluded). The plate is coupled on its upper and/or lower surface to a stationary, inviscid, compressible fluid of density $\rho_f$ and sound speed $c_f$.

The solid obeys the elastodynamic equation
\begin{equation}
    \nabla \cdot \boldsymbol{\sigma} = \rho_s \ddot{\mathbf{u}},
    \qquad
    \boldsymbol{\sigma} = \mathbf{C} : \boldsymbol{\varepsilon},
    \label{eq:elastodynamic}
\end{equation}
with $\rho_s$ the solid density. In the fluid half-space the acoustic pressure satisfies the Helmholtz equation
\begin{equation}
    \nabla^2 p_f + k_f^2 p_f = 0,
    \qquad
    k_f = \omega / c_f.
    \label{eq:helmholtz}
\end{equation}
At the fluid--solid interface the normal traction and normal displacement are continuous:
\begin{equation}
    \boldsymbol{\sigma}\cdot\mathbf{n} = -p_f \mathbf{n},
    \qquad
    \mathbf{u}\cdot\mathbf{n} = \mathbf{u}_f\cdot\mathbf{n}.
    \label{eq:interface}
\end{equation}

Seeking time-harmonic guided waves propagating in the $x$ direction,
\begin{equation}
    \{\mathbf{u}, p_f\}(x,z,t) = \{\hat{\mathbf{u}}(z), \hat{p}_f(z)\}\, e^{i(kx-\omega t)},
    \label{eq:ansatz}
\end{equation}
reduces the problem to a one-dimensional eigenvalue problem in the transverse coordinate $z$. The fluid pressure that satisfies the radiation condition at infinity is
\begin{equation}
    \hat{p}_f(z) = P_0\, e^{i k_z (z-z_{\rm surf})},
    \qquad
    k_z = \sqrt{k_f^2 - k^2},
    \label{eq:kz_def}
\end{equation}
where the branch of the square root is chosen such that $\Im(k_z) > 0$ (outgoing or evanescent decay), and when $\Im(k_z)=0$, $\Re(k_z)\ge 0$. The normal component of the linearized Euler equation together with the kinematic interface condition \cref{eq:interface} yields the acoustic impedance
\begin{equation}
    Z(\omega,k) = -\frac{i\rho_f \omega^2}{k_z},
    \label{eq:impedance}
\end{equation}
which enters the structural boundary condition as a non-local, frequency- and wavenumber-dependent stiffness. The resulting dispersion relation is a \emph{nonlinear} eigenvalue problem
\begin{equation}
    \mathbf{D}(\omega,k)\,\mathbf{q} = \mathbf{0},
    \label{eq:dispersion}
\end{equation}
where $\mathbf{q}$ collects the modal degrees of freedom and $\mathbf{D}$ is the dynamic matrix. For the vacuum plate ($\rho_f=0$), $\mathbf{D}$ is Hermitian for real $k$; the spectrum is real, eigenmodes are orthogonal, and level repulsion forbids true crossings between modes of the same symmetry family. The fluid introduces a geometric non-Hermitian self-energy through $Z(\omega,k)$, rendering $\mathbf{D}$ non-self-adjoint and the eigenvalues $k$ complex.

In what follows, the specific discretization used to obtain $\mathbf{D}$ (here, a Semi-Analytical Finite Element scheme with spectral elements \cite{gavric_computation_1995,castaings_finite_2008}) is immaterial to the topological results; the conclusions hold for any consistent spatial discretization of the continuum problem \cref{eq:elastodynamic}--\cref{eq:interface}. The explicit spectral-element construction of $\mathbf{D}(\omega,k)$, including the definitions of $\mathbf{K}_1$, $\mathbf{K}_2$, $\mathbf{K}_3$, $\mathbf{M}$, and the fluid-loading projection $\mathbf{H}$, is given in \cref{app:SAFE}. All topological results of \cref{sec:EP_topology,sec:observation_space,sec:migration} are independent of the specific choice of basis functions, provided the discretization is consistent.

\subsection{Radiation sheets, admissibility conditions, and the physical sector}
\label{sec:outgoing_sheet}

The square-root function $k_z(\omega,k)$ in \cref{eq:kz_def} endows the complex wavenumber plane with a natural two-sheeted structure. The two \emph{radiation sheets} are distinguished by the Sommerfeld condition \cite{schot_eighty_1992}: the \emph{outgoing sheet} $\mathcal{R}_{\rm out}$ ($\Im(k_z)>0$), on which the acoustic field carries energy away from the plate or decays exponentially, and the \emph{incoming sheet} $\mathcal{R}_{\rm in}$ ($\Im(k_z)<0$), which is physically inadmissible. The sheets meet along the singular locus
\begin{equation}
    \mathcal{S} = \{(\omega,k) : k_z(\omega,k)=0\},
    \label{eq:singular_locus}
\end{equation}
where the fluid impedance \cref{eq:impedance} diverges. On the real $(\omega,k)$ plane, $\mathcal{S}$ reduces to the \emph{sound line} $\omega = c_f k$, marking the transition between subsonic and supersonic phase velocities. It is essential for what follows that $\mathcal{S}$ is a singularity of the \emph{operator} $\mathbf{D}$ where the fluid impedance diverges. The dispersion manifold does not cross $\mathcal{S}$, and eigenpairs exist only on one side of it. Trapped modes whose wavenumbers are real may approach $\mathcal{S}$ as $k\to\omega/c_f$, where the impedance singularity terminates the branch. Leaky modes, by contrast, remain at a finite distance from $\mathcal{S}$ even when their phase velocity approaches $c_f$, because their complex wavenumbers keep $|k_z|$ bounded away from zero. The locus $\mathcal{S}$ therefore acts as a global constraint on the parameter domain, but the way a branch interacts with it depends on the branch type.

Physical solutions are selected by two independent admissibility conditions:
\begin{enumerate}
    \item[(i)] \emph{Causality:} $\Im(k) \ge 0$, so that the wave either propagates without growth or decays along the guide \cite{rokhlin_topology_1989}.
    \item[(ii)] \emph{Radiation:} $\Im(k_z) > 0$, selecting the outgoing sheet.
\end{enumerate}
The real hypersurfaces $\Im(k)=0$ and $\Im(k_z)=0$ in the complex parameter space are therefore \emph{admissibility boundaries}: across them, the exponential behavior of the wave changes character (growth versus decay along the guide; incoming versus outgoing in the fluid). These boundaries are physically meaningful loci---not conventions---and they will play a central role in \cref{sec:migration}, where exceptional points are shown to cross them under fluid loading. The intersection of the two admissible half-spaces defines the \emph{physical sector} of the dispersion manifold.

Solutions on the outgoing sheet that satisfy $\Im(k)\ge 0$ fall into three disjoint classes according to their phase velocity and radiation character:

\begin{itemize}
    \item \textbf{Trapped modes (Scholte-type).} $\Im(k)=0$, $|\Re(k)| > \omega/c_f$ (subsonic, $v_{\rm p}<c_f$). Here $k_z$ is purely imaginary with $\Im(k_z)>0$; the fluid field decays exponentially, and the impedance is real (pure added mass). These modes carry no radiation loss.

    \item \textbf{Leaky modes (leaky Lamb waves).} $\Im(k)>0$, $|\Re(k)| < \omega/c_f$ (supersonic, $v_{\rm p}>c_f$). Here $k_z$ is complex with $\Im(k_z)>0$; the fluid field radiates energy to infinity, and the positive $\Im(k)$ encodes attenuation along the guide.

    \item \textbf{Improper modes.} $\Im(k)>0$, $\Im(k_z)>0$, yet $|\Re(k)| > \omega/c_f$ (subsonic). Although they satisfy the algebraic dispersion relation, their phase velocity is below the fluid sound speed, so no propagating wave is launched into the fluid; they are non-physical artifacts frequently emitted by fixed-frequency root solvers.
\end{itemize}

A solution is physically admissible if and only if it belongs to the trapped or leaky categories. All solutions on $\mathcal{R}_{\rm in}$ violate the radiation condition and are excluded.

\begin{remark}[Two distinct singular structures]
\label{rem:two_singularities}
The problem contains two singular structures of entirely different nature, which must not be conflated. The first is the operator singularity $\mathcal{S}$ \cref{eq:singular_locus}: a surface through which the dispersion manifold never passes, acting as a global constraint on the parameter domain. The second consists of the \emph{exceptional points}: branch points of the eigenvalue function on the dispersion manifold itself, located at finite distance from $\mathcal{S}$, where two eigenvalues \emph{and} their eigenvectors coalesce (rendering $\mathbf{D}$ defective). The former governs the global boundedness of the spectral landscape; the latter governs the local connectivity of branches. The restructuring mechanisms of \cref{sec:observation_space,sec:migration} are organized by the interplay of the two: the observation space is bounded by $\mathcal{S}$, while its internal topology is controlled by exceptional points.
\end{remark}

\subsection{The dispersion manifold and the physical observation space}
\label{sec:obs_space}

For the analysis of exceptional-point migration in \cref{sec:migration}, it is convenient to regard $k_z$ as an independent complex variable constrained by the fluid dispersion relation. The \emph{dispersion manifold} is then
\begin{equation}
    \mathcal{M} = \left\{ (\omega,k,k_z)\in\mathbb{C}^3 : \det\mathbf{D}(\omega,k,k_z)=0, \;\; k_z^2 + k^2 = \frac{\omega^2}{c_f^2} \right\},
    \label{eq:dispersion_manifold}
\end{equation}
a multi-sheeted branched covering of the parameter space whose sheets are the eigenvalue branches. Its branch points are the exceptional points studied in \cref{sec:EP_topology}. The admissibility conditions of \cref{sec:outgoing_sheet} carve out of $\mathcal{M}$ the physical sector
\begin{equation}
    \mathcal{M}_{\rm phys} = \left\{ (\omega,k,k_z)\in\mathcal{M} : \Im(k)\ge 0,\ \Im(k_z)>0 \right\},
    \label{eq:physical_sector}
\end{equation}
whose boundary segments lie on the admissibility hypersurfaces $\Im(k)=0$ and $\Im(k_z)=0$.

Experiments and standard computations do not access $\mathcal{M}$ directly; they access its projection onto the real frequency--wavenumber plane. We define the \emph{physical observation space} as
\begin{equation}
    \Omega_{\rm obs} = \left\{ (\omega,\Re k)\in\mathbb{R}^2 : (\omega,k,k_z)\in\mathcal{M}_{\rm phys},\ \omega\in\mathbb{R} \right\},
    \label{eq:observation_space}
\end{equation}
together with the attenuation $\Im(k)$ and energy velocity $v_e$ attached to each point. The set $\Omega_{\rm obs}$ is, by construction, unbounded in frequency: it contains every physically admissible guided‑wave solution of the plate–fluid system. In practice, however, one interrogates only a finite \emph{observation window} dictated by the application. For ultrasonic nondestructive evaluation and structural health monitoring, the frequency–thickness product rarely exceeds a few $\mathrm{MHz \cdot mm}$, and all numerical illustrations in this work are confined to this window. The unbounded theoretical definition remains essential, because it allows us to describe the loading‑induced extension of branches down to the sound line without artificial truncation---an extension that may bring physically admissible solutions into the observation window from frequency intervals that were empty in the vacuum limit.

Two facets of $\Omega_{\rm obs}$ organise this paper. Its \emph{membership}---which branches populate it, and over which frequency intervals---is set by the admissibility conditions and by the sound line. Its \emph{topology}---the pattern of veerings and crossings through which branches interact and exchange identity---is set by the exceptional points of $\mathcal{M}$ that reside in the physical sector.

The structure of $\Omega_{\rm obs}$ in the vacuum limit is instructive. Each Lamb mode of the free plate is observable above its cut-off frequency $\omega_{\rm c}$, so that its contribution to $\Omega_{\rm obs}$ is a branch supported on the half-line $[\omega_{\rm c},\infty)$; below $\omega_{\rm c}$ the mode possesses no real-wavenumber solution and simply does not exist in the observation space. The vacuum observation space is thus a union of half-line-supported branches, partitioned by the sound line into subsonic and supersonic segments. One of the central findings of this work (\cref{sec:observation_space}) is that fluid loading deforms this structure continuously: the lower endpoint of a branch can migrate from the vacuum cut-off down to the sound line, so that solutions without any real-$k$ vacuum existence enter the observation space---while, simultaneously, the connectivity of branches within the observation space is rewritten by exceptional-point migration (\cref{sec:migration}). The physical spectrum is therefore not a fixed object inherited from the vacuum plate, but a loading-dependent section of the dispersion manifold.

\section{Exceptional-point symmetry and its observable signature}
\label{sec:EP_topology}

This section establishes the \emph{static} rules that exceptional points (EPs) obey on the dispersion manifold $\mathcal{M}$: the conjugate-pair symmetry inherited from the realness of the constitutive parameters (\cref{sec:EP_symmetry}); the signature that a physical-sector EP pair imprints on real-frequency mode interactions, which serves as the topological criterion for mode identification (\cref{sec:signature}); and the classification of the distinct ways an EP can cross the admissibility boundaries under parameter variation, setting the stage for the migration analysis of \cref{sec:migration}.

\subsection{Conjugate symmetry of exceptional-point pairs}
\label{sec:EP_symmetry}

The dynamic matrix $\mathbf{D}(\omega,k,k_z)$ of the augmented formulation \cref{eq:dispersion_manifold} inherits a fundamental symmetry from the realness of the elastic moduli and fluid parameters.

\begin{proposition}[Conjugate symmetry of the dynamic matrix]
\label{prop:Dconjugate}
For real elastic moduli and real fluid parameters, the dynamic matrix satisfies
\begin{equation}
    \mathbf{D}(\omega^*,-k^*,-k_z^*) = \mathbf{D}(\omega,k,k_z)^*
    \label{eq:D_conjugate}
\end{equation}
for all $(\omega,k,k_z)$ satisfying the constraint $k_z^2+k^2=\omega^2/c_f^2$, i.e.\ on the entire dispersion manifold $\mathcal{M}$.
\end{proposition}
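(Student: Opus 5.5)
The plan is to write out the explicit dependence of $\mathbf{D}$ on $(\omega,k,k_z)$ from the SAFE construction of \cref{app:SAFE}, and check the identity term by term. For a SAFE discretization with $e^{i(kx-\omega t)}$, the elastic part has the standard form
\begin{equation*}
\mathbf{D}_{\rm s}(\omega,k)=k^2\mathbf{K}_1+ik\,\mathbf{K}_2+\mathbf{K}_3-\omega^2\mathbf{M},
\end{equation*}
and the fluid loading enters as $\omega^2\rho_f$ times a real projection $\mathbf{H}$ divided by $k_z$, up to a factor $\pm i$ fixed by \cref{eq:impedance}. The symmetry is not assumed from abstract principles. It comes from three facts: the realness properties of the coefficient matrices, the parity of each monomial in $(\omega,k,k_z)$, and the placement of the factor $i$.

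First I record the realness properties. $\mathbf{K}_1$, $\mathbf{K}_3$, $\mathbf{M}$ and $\mathbf{H}$ are real, because they are integrals of real stiffness and density against real shape functions. $\mathbf{K}_2$ is also real. In the common convention it is antisymmetric, but only realness matters here, since the factor $i$ has been pulled out explicitly.

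Next I substitute $(\omega^*,-k^*,-k_z^*)$ term by term:
\begin{itemize}
\item $(-k^*)^2\mathbf{K}_1=(k^2\mathbf{K}_1)^*$.
\item $i(-k^*)\mathbf{K}_2=(ik\mathbf{K}_2)^*$, because $(ik)^*=-ik^*$.
\item $\mathbf{K}_3$ is real and unchanged.
\item $(\omega^*)^2\mathbf{M}=(\omega^2\mathbf{M})^*$.
\item For the fluid term, $i\rho_f(\omega^*)^2\mathbf{H}/(-k_z^*)=\bigl(i\rho_f\omega^2\mathbf{H}/k_z\bigr)^*$, since $(i/k_z)^*=-i/k_z^*$.
\end{itemize}
Adding the terms gives $\mathbf{D}(\omega^*,-k^*,-k_z^*)=\mathbf{D}(\omega,k,k_z)^*$.

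I then check that the constraint is preserved: $(-k_z^*)^2+(-k^*)^2=(k_z^2+k^2)^*=(\omega^2/c_f^2)^*=(\omega^*)^2/c_f^2$, because $c_f$ is real. The map therefore sends $\mathcal{M}$ to itself, so the statement holds on the whole manifold. Treating $k_z$ as an independent variable means no branch-cut choice of the square root is involved. This is exactly why the augmented formulation is used: $\sqrt{\cdot}$ with a fixed cut does not commute cleanly with conjugation.

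The main obstacle is showing that the parities really are as claimed for a general anisotropic laminate. With monoclinic or rotated plies, odd powers of $k$ appear only in the $ik\mathbf{K}_2$ term, and $\mathbf{K}_2$ must remain real. That requires the $\partial_x\to ik$ factors to be tracked consistently, with no hidden complex coefficients in the $\mathbf{K}_i$. I would verify this directly from the weak form: each $\partial_x$ contributes $ik$, and all other factors (the $\mathbf{C}$ components and the $z$-derivatives of real shape functions) are real. The same bookkeeping confirms that the fluid term carries exactly one factor of $i/k_z$ times real quantities. Double-sided loading adds a second copy of the same structure, so it does not change the argument.
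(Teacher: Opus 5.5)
Your proof is correct and follows the paper's own argument: term-by-term substitution using the realness of the coefficient matrices and the impedance identity $Z(\omega^*,-k_z^*)=Z(\omega,k_z)^*$, followed by a check that the constraint $k_z^2+k^2=\omega^2/c_f^2$ is preserved. Your observation that entrywise conjugation needs only the realness of $\mathbf{K}_2$, not its skew-symmetry, is accurate and slightly sharper than the paper's wording; note that your $\mathbf{K}_1$ and $\mathbf{K}_3$ labels are swapped relative to \cref{app:SAFE}, which is harmless because both matrices are real.
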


\begin{proof}
The structural stiffness and mass matrices are real and symmetric; the gyroscopic matrix (linear in $k$) is purely imaginary and skew-symmetric, so that $(i k \mathbf{K}_2)^* = -i(-k^*)\mathbf{K}_2$. For the fluid term, $Z\propto\omega^2/k_z$ with real $\rho_f$, $c_f$ gives $Z(\omega^*,-k_z^*) = Z(\omega,k_z)^*$. Substitution into $\mathbf{D}$ yields \cref{eq:D_conjugate}. The constraint is preserved: $(-k_z^*)^2+(-k^*)^2-(\omega^*)^2/c_f^2 = (k_z^2+k^2-\omega^2/c_f^2)^* = 0$.
\end{proof}

An exceptional point $(\omega_{\mathrm{EP}},k_{\mathrm{EP}},k_{z,\mathrm{EP}})\in\mathcal{M}$ is a spectral degeneracy where two eigenvalues and their eigenvectors coalesce, rendering $\mathbf{D}$ defective \cite{heiss_physics_2012}.

\begin{proposition}[Conjugate symmetry maps forward EPs to backward EPs]
\label{prop:EPpair}
If $(\omega_{\mathrm{EP}},k_{\mathrm{EP}},k_{z,\mathrm{EP}})$ is an exceptional point, then so is
\begin{equation}
    (\omega_{\mathrm{EP}}^*,\,-k_{\mathrm{EP}}^*,\,-k_{z,\mathrm{EP}}^*).
    \label{eq:EP_conjugate}
\end{equation}
The map $(\omega,k,k_z)\mapsto(\omega^*,-k^*,-k_z^*)$ sends an exceptional point with wavenumber $k_{\mathrm{EP}}$ to one with wavenumber $-k_{\mathrm{EP}}^*$. Consequently, if we designate a \emph{forward} sector by $\Re k>0$, then the image EP lies in the \emph{backward} sector ($\Re k<0$). The real parts of the two wavenumbers are opposite in sign, while the imaginary parts of both $k$ and $k_z$ are preserved:
\begin{equation}
\Re(-k_{\mathrm{EP}}^*) = -\Re k_{\mathrm{EP}},
\qquad
\Im(-k_{\mathrm{EP}}^*) = \Im k_{\mathrm{EP}},
\qquad
\Im(-k_{z,\mathrm{EP}}^*) = \Im k_{z,\mathrm{EP}}.
\end{equation}
The map therefore acts within each admissibility region of $\mathcal{M}$, and in particular preserves the physical sector $\mathcal{M}_{\mathrm{phys}}$.
\end{proposition}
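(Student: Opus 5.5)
The plan is to transport the defining properties of an exceptional point through the antilinear symmetry of \cref{prop:Dconjugate}. Write $\sigma(\omega,k,k_z)=(\omega^*,-k^*,-k_z^*)$. First I will check that $\sigma$ maps $\mathcal{M}$ to itself. The constraint is preserved by the last line of the proof of \cref{prop:Dconjugate}. Moreover, $\det\mathbf{D}(\sigma(\omega,k,k_z))=\det(\mathbf{D}(\omega,k,k_z)^*)=(\det\mathbf{D}(\omega,k,k_z))^*$, which vanishes whenever the original determinant does. So $\sigma$ is an involutive automorphism of $\mathcal{M}$.

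Next, I need the defectiveness and coalescence at the EP. Let $\mathbf{q}_{\rm EP}$ be the coalesced right eigenvector with $\mathbf{D}\mathbf{q}_{\rm EP}=\mathbf{0}$. Conjugating gives $\mathbf{D}(\sigma(\cdot))\,\mathbf{q}_{\rm EP}^*=\mathbf{0}$. The same holds for the left eigenvector, so the pairing $\mathbf{p}^{\rm T}\mathbf{q}\mapsto(\mathbf{p}^{\rm T}\mathbf{q})^*$ vanishes at the image point exactly when it vanishes at the original; this vanishing is the usual self-orthogonality signature of an EP. For a characterization that does not depend on eigenvectors, I would use the local branch structure instead. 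Near the EP at fixed $\omega$ near $\omega_{\rm EP}$, two roots $k_\pm(\omega)$ of $\det\mathbf{D}=0$ behave like $k_{\rm EP}\pm c\sqrt{\omega-\omega_{\rm EP}}$. Applying $\sigma$ shows that $-k_\pm(\omega^{*})^*$ are roots at frequency $\omega$ near $\omega_{\rm EP}^*$, and they coalesce with a square-root branch point at $-k_{\rm EP}^*$. Likewise, the first-order Jordan chain equation $\mathbf{D}\mathbf{q}_1=-\partial_k\mathbf{D}\,\mathbf{q}_0$ conjugates correctly. The reason is that $\partial_k$ of $\mathbf{D}(\omega^*,-k^*,\cdot)$ picks up a sign from $-k^*$, and this sign is absorbed by replacing $\mathbf{q}_1^*$ with $-\mathbf{q}_1^*$. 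Hence the Jordan chain, and with it the defectiveness, is preserved.

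The final step is the elementary bookkeeping of real and imaginary parts. Since $-k^*=-\Re k+i\Im k$, the real part flips sign and the imaginary part is kept, so $\Re k>0$ maps to $\Re k<0$. The same computation applies to $k_z$. Because the admissibility conditions $\Im k\ge0$ and $\Im k_z>0$ involve only imaginary parts, $\sigma$ preserves each admissibility region and in particular $\mathcal{M}_{\rm phys}$.

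I expect the main obstacle to be the treatment of $k_z$ as an independent variable together with the branch convention. The statement concerns EPs of $\mathbf{D}(\omega,k,k_z)$ on the augmented manifold, so I must make sure that the Jordan-chain and derivative argument is taken along $\mathcal{M}$ (with $dk_z/dk=-k/k_z$), not in the ambient $\mathbb{C}^3$. I would check that $\sigma$ commutes with this tangent relation: $-(-k^*)/(-k_z^*)=(-k/k_z)^*$, with the sign flips consistent. The chain rule then carries the EP condition over.
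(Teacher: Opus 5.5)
Your proposal is correct and follows the paper's route: conjugate $\det\mathbf{D}$ via Proposition~\ref{prop:Dconjugate}, argue that the coalescence structure carries over to the image point, and read off the preserved imaginary parts of $k$ and $k_z$. Your Jordan-chain check ($\partial_k\mathbf{D}\circ\sigma=-(\partial_k\mathbf{D})^*$, absorbed by $\mathbf{q}_1\mapsto-\mathbf{q}_1^*$) and your check of the constraint tangent spell out what the paper compresses into ``the degeneracy condition transforms identically''. One small correction: for this nonlinear eigenproblem the self-orthogonality condition is $\mathbf{p}^{\mathsf T}\partial_k\mathbf{D}\,\mathbf{q}=0$ rather than $\mathbf{p}^{\mathsf T}\mathbf{q}=0$, although it transforms in the same way.
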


\begin{proof}
From \cref{prop:Dconjugate}, $\det\mathbf{D}(\omega_{\mathrm{EP}}^*,-k_{\mathrm{EP}}^*,-k_{z,\mathrm{EP}}^*) = [\det\mathbf{D}(\omega_{\mathrm{EP}},k_{\mathrm{EP}},k_{z,\mathrm{EP}})]^* = 0$, and the degeneracy condition transforms identically, preserving the coalescence order. The sign preservation follows from $\Im(-k^*)=\Im(k)$ and $\Im(-k_z^*)=\Im(k_z)$.
\end{proof}

\begin{remark}[No self-conjugate EP; projection of the pair in the Hermitian limit]
\label{rem:pairing_scope}
The conjugate symmetry established in \cref{prop:EPpair} also guarantees that a forward EP and its backward image never coincide. A self-conjugate point would require $\omega_{\mathrm{EP}}=\omega_{\mathrm{EP}}^*$ (real $\omega$), $k_{\mathrm{EP}}=-k_{\mathrm{EP}}^*$ (pure imaginary $k$), and $k_{z,\mathrm{EP}}=-k_{z,\mathrm{EP}}^*$ (pure imaginary $k_z$). At real frequencies the elastodynamic operator is either Hermitian (trapped regime, real $k$) or genuinely non-Hermitian (leaky regime, complex $k$ with nonzero real part); a pure imaginary wavenumber $k$ together with a pure imaginary $k_z$ cannot satisfy the dispersion relation and the fluid constraint simultaneously at any finite $(\omega,k,k_z)$. Hence no self-conjugate EP exists on $\mathcal{M}$. Even when an EP reaches the real frequency axis ($\Im\omega_{\mathrm{EP}}=0$), it appears as two distinct points at the same frequency in the forward and backward sectors, corresponding to coalescence in the two opposite propagation directions.
\end{remark}

\subsection{The observable signature of a physical-sector EP pair}
\label{sec:signature}

Consider two interacting forward-propagating physical branches
$k_A(\omega)$ and $k_B(\omega)$ traced along the real frequency axis.
Their interaction is controlled by a single conjugate pair of
exceptional points, one in the forward sector and one in the backward
sector:
\begin{equation}
P_{\mathrm f}=(\omega_{\mathrm{EP}}, k_{\mathrm{EP}}, k_{z,\mathrm{EP}}),
\qquad
P_{\mathrm b}=(\omega_{\mathrm{EP}}^*,-k_{\mathrm{EP}}^*,-k_{z,\mathrm{EP}}^*).
\label{eq:conjugate_EP_pair}
\end{equation}
As shown in \cref{prop:EPpair}, the two partners have the same
frequency real part $\Re\omega_{\mathrm{EP}}$, opposite real
wavenumber parts $\pm\Re k_{\mathrm{EP}}$, and identical imaginary
parts of both $k$ and $k_z$. If $\Im\omega_{\mathrm{EP}}\neq0$, the
two EPs lie on opposite sides of the real frequency axis.

The observable signature depends on whether the two interacting
branches belong to the trapped or to the leaky regime.

\paragraph{Trapped regime.}
In the trapped regime the admissible wavenumbers are real ($\Im k=0$), and the fluid impedance is real. The eigenvalue problem on the real
frequency axis is therefore Hermitian, and the two forward branches
have real wavenumbers. Since the real axis contains no exceptional
point, the two branches remain distinct for all real frequencies.
Two real branches can be distinct only through separation in their
real parts; they therefore exhibit the familiar real-wavenumber
veering,
\begin{equation}
\boxed{\Re k:\ \text{veering},\qquad \Im k=0.}
\label{eq:trapped_signature}
\end{equation}
This veering is precisely the projection of the conjugate EP pair
\cref{eq:conjugate_EP_pair} onto the real axis: the EPs lie off the
real frequency axis, one above and one below, and the real axis passes
between them.

\paragraph{Leaky regime.}
In the leaky regime the admissible wavenumbers are genuinely complex,
$k = k_{\mathrm r} + i k_{\mathrm i}, \; k_{\mathrm i}\neq0$,
and the system is non-Hermitian. The
interaction of the two forward branches is governed by the same
conjugate EP pair. Here we assume that both EPs satisfy the physical
admissibility conditions, so that they reside within the physical sector of the dispersion manifold ($ P_{\mathrm f}, P_{\mathrm b} \in \mathcal{M}_{\rm phys}$) and exercise a direct topological control on the physical dispersion curves.

Two cases must be distinguished.

\emph{EPs on the real frequency axis.} If
$\Im\omega_{\mathrm{EP}}=0$, the two conjugate EPs coincide in
frequency, lying at $(\omega_0,k_{\mathrm{EP}},k_{z,\mathrm{EP}})$ and
$(\omega_0,-k_{\mathrm{EP}}^*,-k_{z,\mathrm{EP}}^*)$. Their
wavenumbers have opposite real parts and equal imaginary parts, so the
two EPs are distinct points of $\mathcal{M}$ but share the same real
frequency. In this degenerate situation the real-axis scan passes
through the exceptional points themselves, and the two physical
branches cannot be assigned a unique continuous mode identity across
the interaction. This case requires fine tuning and is not generic; it
is excluded from the remainder of the discussion.

\emph{EPs off the real frequency axis.} In the generic case
$\Im\omega_{\mathrm{EP}}\neq0$, the real frequency axis contains no
exceptional point, and the two forward branches
$k_A(\omega), k_B(\omega)$ are single-valued analytic functions on the
entire real line. The EP topology still enforces the exchange of the
two modal characters when the real axis is traversed through the
interaction region: a contour consisting of the real interval
$[\omega_a,\omega_b]$ straddling the interaction and a large semicircle
in the upper half-plane encloses exactly one EP, and analytic
continuation around that loop interchanges the two branches. As a
result, the physical modes exchange their identity between the low-
and high-frequency sides of the interaction.

Because the two EPs lie on opposite sides of the real axis and both
reside in the physical sector, this exchange is realized as a veering
of the real wavenumbers together with a crossing of the imaginary
wavenumbers. The real parts remain separated throughout the
interaction; the imaginary parts, which are identical at the level of
the analytic sheets, are interchanged when the branches are followed
according to a continuous physical label such as attenuation or
energy velocity. The observable signature in the leaky regime is
therefore
\begin{equation}
\boxed{\Re k:\ \text{veering},\qquad \Im k:\ \text{crossing}.}
\label{eq:leaky_signature}
\end{equation}
The imaginary-part crossing is a physical-label crossing: the two
complex eigenvalues remain distinct at the crossing point,
\begin{equation}
\Im k_A=\Im k_B, \qquad \Re k_A\neq\Re k_B,
\end{equation}
and hence $k_A\neq k_B$.

The minimum separation between the two branches is controlled by the
distance of the conjugate EP pair from the real frequency axis. Based on the local Puiseux expansion \cite{nennig_high_2020}, a local two-EP description gives 
\begin{equation}
|k_A-k_B|
\simeq
2|c_2|
\sqrt{(\omega-\Re \mathrm{EP})^2+\Im \mathrm{EP}^2},
\end{equation}
so that
\begin{equation}
\min_{\omega\in\mathbb{R}}
|k_A-k_B|
\sim
2|c_2|\,|\Im \mathrm{EP}|.
\label{eq:gap_EP_distance}
\end{equation}
Thus the veering gap scales linearly with $|\Im \mathrm{EP}|$, the distance of
the EP pair from the real axis.

\begin{proposition}[Observable signature of a physical-sector EP pair]
\label{prop:signature}
Consider two interacting forward-propagating physical modes whose
interaction is controlled by a conjugate pair of physical-sector EPs
\eqref{eq:conjugate_EP_pair} with $\Im\omega_{\mathrm{EP}}\neq0$.
Then along the real frequency axis:
\begin{enumerate}[label=(\roman*)]
    \item in the trapped regime, the two branches exhibit
    real-wavenumber veering with zero imaginary wavenumber;
    \item in the leaky regime, the two branches exhibit
    real-wavenumber veering and imaginary-wavenumber crossing.
\end{enumerate}
The latter statement applies only to branches identified by a
continuous physical label; the analytic Riemann sheets remain distinct
throughout the interaction.
\end{proposition}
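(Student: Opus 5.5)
The plan is to assemble the proof from \cref{prop:EPpair} and a local two-level reduction near the conjugate EP pair, then treat the two regimes separately. First, I fix a real interval $[\omega_a,\omega_b]$ containing $\Re\omega_{\mathrm{EP}}$ and containing no other branch point of the two branches. On it I apply a Schur-complement or Lyapunov--Schmidt reduction of $\mathbf{D}(\omega,k,k_z(\omega,k))$ onto the two-dimensional invariant subspace of the interacting modes. The two eigenvalues are then the roots of an analytic quadratic $k^2 - s(\omega)k + p(\omega)=0$ with discriminant $\Delta(\omega)$. By hypothesis the only zeros of $\Delta$ near the interval are $\omega_{\mathrm{EP}}$ and, by \cref{prop:EPpair}, the frequency $\omega_{\mathrm{EP}}^*$ of its partner. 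Since $\Im\omega_{\mathrm{EP}}\neq0$, these zeros lie off the real axis on opposite sides, so $\Delta$ has no zeros on $[\omega_a,\omega_b]$. Consequently $k_A$ and $k_B$ are distinct, single-valued and analytic there, and locally $k_A-k_B\simeq 2c_2\sqrt{(\omega-\omega_{\mathrm{EP}})(\omega-\omega_{\mathrm{EP}}^*)}$, which is the Puiseux form quoted before the statement.

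\emph{Part (i), trapped regime.} For real $\omega$ and real $k$ with $|k|>\omega/c_f$, the quantity $k_z$ is purely imaginary and the impedance $Z$ is real. The reduced $2\times2$ problem is therefore Hermitian, so $k_A,k_B$ are real and $\Im k=0$. They cannot coincide, since coincidence would be a zero of $\Delta$ on the real axis. Hence $\Re k_A-\Re k_B$ keeps one sign throughout. Its minimum magnitude is $\simeq 2|c_2||\Im\omega_{\mathrm{EP}}|>0$, which by \cref{eq:gap_EP_distance} is precisely a veering.

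\emph{Part (ii), leaky regime.} Nonvanishing of $\Delta$ again gives two distinct analytic sheets, which proves the final sentence of the statement. To obtain the signature I take a continuous physical label, such as attenuation or energy velocity, that is monotone across the interaction. I then use the monodromy argument stated before the proposition: the loop formed by $[\omega_a,\omega_b]$ and an upper semicircle encloses exactly one EP, so continuing around it swaps the sheets. Equivalently, the label carried by sheet $A$ at $\omega_a$ is carried by sheet $B$ at $\omega_b$. For the real parts, $\Re(k_A-k_B)$ stays bounded away from zero by the same gap estimate, since both EPs sit at distance $|\Im\omega_{\mathrm{EP}}|$ from the axis; this is the veering. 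For the imaginary parts, the exchange of character implies that $\Im(k_A-k_B)$, measured along the label-continued branches, changes sign on $[\omega_a,\omega_b]$. By the intermediate value theorem it vanishes at some $\omega_\times$, and at that point $k_A\neq k_B$ because $\Re k_A\neq\Re k_B$. This is the crossing in \cref{eq:leaky_signature}. Membership of both EPs in $\mathcal{M}_{\rm phys}$ is used so that the branches and the enclosed EP all lie on the outgoing sheet. Otherwise the loop would cross the cut $\Im k_z=0$, and the monodromy would not apply to the physical branches.

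The main obstacle is the leaky sign-change step. I must show that the character exchange forces $\Im(k_A-k_B)$, rather than $\Re(k_A-k_B)$, to change sign, which uses the specific phase of $c_2$ and of $\Delta$ near the pair. I would first expand $\sqrt{\Delta}$ explicitly as $c_2\sqrt{(\omega-\Re\omega_{\mathrm{EP}})^2+(\Im\omega_{\mathrm{EP}})^2}$. On the real axis the radicand is real and positive, so the argument of $k_A-k_B$ is frozen at $\arg c_2$. Hence $\Re$ and $\Im$ of the sheet difference cannot both vanish on the axis. The remaining task is to relate the physical label to the imaginary part through the mean term $s(\omega)/2$ and the first-order correction. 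This step is heuristic, and I would state it explicitly as an assumption on the label.
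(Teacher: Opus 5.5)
You follow the paper's own route: Hermiticity plus an off-axis EP pair for (i), and the contour/monodromy exchange argument for (ii). Part (i) is sound in substance, but two justifications need repair. First, Hermiticity of $\mathbf{D}$ at real $k$ does not make the roots in $k$ real (the vacuum Lamb spectrum has complex roots at real $\omega$); reality of $k_{A,B}$ is part of the trapped-regime hypothesis. Second, and more important, \cref{prop:EPpair} does not give your forward discriminant a zero at $\omega_{\mathrm{EP}}^*$. The image EP has wavenumber $-k_{\mathrm{EP}}^*$, so it is a coalescence of the backward sheets $\omega\mapsto-\overline{k_{A,B}(\bar\omega)}$, not of the roots $k_A,k_B$ of your quadratic. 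The forward partner you need comes instead from the adjoint symmetry $\mathbf{D}(\omega,k,k_z)^\dagger=\mathbf{D}(\omega^*,k^*,-k_z^*)$, which holds because $\mathbf{K}_2$ is skew-symmetric and $\mathbf{Z}$ is diagonal. Since $k_{A,B}$ are real on a real interval, Schwarz reflection gives $k_{A,B}(\omega^*)=k_{A,B}(\omega)^*$, hence EPs at $(\omega_{\mathrm{EP}},k_{\mathrm{EP}})$ and $(\omega_{\mathrm{EP}}^*,k_{\mathrm{EP}}^*)$. In the leaky regime this reflected point has $\Im k<0$ and the sheets are not real on the axis. Nothing then places the forward zeros in conjugate positions, so your two-EP Puiseux form is unsupported there.

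The genuine gap is the sign change of $\Im(k_A-k_B)$ in (ii). Within the local model you set up, that step is not merely heuristic but false. With $\Delta\propto c_2^2(\omega-\omega_{\mathrm{EP}})(\omega-\omega_{\mathrm{EP}}^*)$ the radicand is positive on the axis, so $\arg(k_A-k_B)\equiv\arg c_2$, as you note. Hence neither $\Re(k_A-k_B)$ nor $\Im(k_A-k_B)$ changes sign, and for real $c_2$ the imaginary parts simply coincide at this order. The monodromy does force an exchange of asymptotic characters. Here, however, those characters are $\bar k\pm c_2(\omega-\Re\omega_{\mathrm{EP}})$ with $\bar k=(k_A+k_B)/2$, and their difference reverses sign in both components. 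The exchange is therefore compatible with a fixed-sign $\Im(k_A-k_B)$; your intermediate-value step tacitly assumes the two characters carry distinct, non-crossing attenuations. Passing to label-continued branches cannot help: since $k_A\neq k_B$ on $[\omega_a,\omega_b]$, any labelling continuous in $k$ coincides with the analytic sheets.

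What is missing is a model of the forward pair's own discriminant with an attenuation mismatch, e.g.\ $\Delta=(a(\omega-\omega_0)+i\gamma)^2+\kappa^2$ with $a,\gamma,\kappa$ real and $\gamma\neq0$. Its zeros $\omega_0+i(\pm\kappa-\gamma)/a$ are both forward EPs and are not conjugate. If $|\gamma|<|\kappa|$ they straddle the axis, $\Re\Delta>0$, and $\Im\Delta=2a\gamma(\omega-\omega_0)$ changes sign; this yields exactly $\Re$-veering with $\Im$-crossing of the sheets themselves. If $|\gamma|>|\kappa|$, both zeros lie on one side and the signature reverses. So (ii) is decided by where the forward pair's own EPs sit relative to the real axis, not by the forward--backward symmetry. (Relatedly, in the augmented formulation $\Im k_z=0$ is not a branch cut, cf.\ \cref{rem:EP_exit_geometry}, so physical-sector membership is not what licenses the monodromy.) The paper's text makes the same leap (``this exchange is realized as a veering \ldots{} together with a crossing''). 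You have located the weak point correctly, but your proposed argument does not close it.
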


\begin{remark}[Diagnostic interpretation and static character]
\label{rem:diagnostic}
These rules are \emph{static}: they hold for a fixed loading and a given location of the controlling exceptional-point pair. The signature established in Proposition~\ref{prop:signature} applies to mode pairs that are continuously connected to a vacuum avoided crossing and whose conjugate EP pair resides in the physical sector. For such pairs, a genuine real-wavenumber crossing with uncorrelated imaginary parts signals that the EP-mediated interaction has been lost from the physical sector. Mode pairs without a vacuum avoided crossing may cross freely without any EP involvement. The loading-induced migration that can drive a physical-sector EP pair out of the physical sector is the subject of \cref{sec:migration}.
\end{remark}

\section{Reconstruction of the physical observation space under fluid loading}
\label{sec:observation_space}

The vacuum spectrum of an elastic plate provides a complete, real‑wavenumber reference: every guided mode with a non‑zero cut‑off frequency $\omega_{\rm c}>0$ exists only on the half‑line $[\omega_{\rm c},\infty)$, where its wavenumber is real. At $\omega=\omega_{\rm c}$ the wavenumber vanishes, the phase velocity diverges, and the mode is therefore always supersonic relative to any finite fluid sound speed $c_f$. Below $\omega_{\rm c}$ the vacuum dispersion equation admits no real‑wavenumber solution; all roots lie in the complex $k$‑plane and represent non‑propagating motions that do not belong to the physical observation space $\Omega_{\rm obs}$.

In this section we establish the first mechanism by which fluid loading restructures the observable spectrum: the continuous \emph{expansion} of the physically admissible frequency interval of a mode. We show that fluid loading turns the evanescent complex roots below $\omega_{\rm c}$ into leaky modes that satisfy the radiation condition, thereby extending the branch continuously downward from the cut‑off. The extension terminates at the sound line $\omega=c_f\Re k$, where the phase velocity drops below $c_f$ and the mode ceases to be a legitimate leaky solution. This mechanism explains why conventional solvers, which initialize all branches at infinite phase velocity, systematically miss the sound‑line‑terminating segments that populate the newly opened frequency window.

\subsection{Deformation of the admissible frequency interval}
\label{sec:interval_deformation}

Consider a vacuum Lamb mode with cut‑off frequency $\omega_{\rm c}>0$. In vacuum the mode contributes to $\Omega_{\rm obs}$ a branch supported on $[\omega_{\rm c},\infty)$, with $k=0$ at $\omega_{\rm c}$ (infinite phase velocity) and $v_{\rm p}>c_f$ at least in a neighborhood above $\omega_{\rm c}$. Below $\omega_{\rm c}$ all vacuum roots of $\det\mathbf{D}_0(\omega,k)=0$ are complex; they describe non‑propagating fields and are absent from $\Omega_{\rm obs}$.

When fluid loading is switched on, the impedance $Z(\omega,k_z)$ perturbs the eigenvalue problem and converts the evanescent vacuum roots below $\omega_{\rm c}$ into complex‑wavenumber solutions on the outgoing sheet. For a small frequency interval immediately below $\omega_{\rm c}$, the phase velocity remains close to infinity, so these solutions are supersonic and automatically satisfy the leaky‑mode condition $\Im(k)>0$, $\Im(k_z)>0$. They therefore belong to the same analytic branch as the original vacuum mode and extend it into the previously empty sub‑cut‑off region.

As frequency decreases further, the complex wavenumber $k(\omega)$ evolves continuously: $\Re k$ increases so that the phase velocity $v_{\rm p}=\omega/\Re k$ falls towards $c_f$, while $\Im k$ remains positive and typically grows, indicating increasing attenuation. Throughout this process $k$ stays complex, $k_z$ remains complex with $\Im(k_z)>0$, and the branch does \emph{not} approach the singular locus $k_z=0$---the impedance stays finite. The extension is therefore a genuine motion of the eigenvalue on the regular part of the outgoing sheet, driven by the fluid perturbation.

The lower endpoint of the extended branch in $\Omega_{\rm obs}$ is set by the sound line: at some frequency $\omega_*$ the condition $v_{\rm p}=c_f$ is reached. Below $\omega_*$ the phase velocity becomes subsonic ($v_{\rm p}<c_f$); the mode can no longer sustain a propagating acoustic field in the fluid and turns into an improper solution, which is excluded from the physical spectrum. Hence the physically observable segment of the branch terminates at the sound line, and the branch now populates the interval $[\omega_*,\infty)$ rather than $[\omega_{\rm c},\infty)$.

The extent of the extension depends on the fluid sound speed $c_f$. For a given mode, a larger $c_f$ shifts the sound line to higher phase velocities, shortening the frequency interval over which the extended branch remains supersonic; consequently, only higher‑order modes---whose cut‑off frequencies are larger and whose phase velocities fall more slowly with decreasing $\omega$---may reach the sound line when the fluid is fast. This parametric dependence is consistent with the numerical examples of \cref{sec:validation}, where multiple higher‑order leaky branches terminate at the sound line under water loading.

\begin{proposition}[Extension of a vacuum cut‑off branch to the sound line]
\label{prop:extension}
Let a vacuum Lamb mode have cut‑off frequency $\omega_{\rm c}>0$. Under fluid loading, the branch deforms continuously: its observable frequency interval expands from $[\omega_{\rm c},\infty)$ to $[\omega_*,\infty)$, where $\omega_*$ is the frequency at which $v_{\rm p}=c_f$. The added segment $[\omega_*,\omega_{\rm c})$ consists of leaky modes with $\Im(k)>0$ and $\Im(k_z)>0$; the branch remains at a finite distance from the singular locus $k_z=0$ throughout this segment. Below $\omega_*$ the solution becomes improper and is excluded from $\Omega_{\rm obs}$.
\end{proposition}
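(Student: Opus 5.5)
The plan is a perturbation-and-continuation argument on the analytic branch that passes through the vacuum cut-off. The first step is local analysis at $(\omega_{\rm c},k=0)$ for $\rho_f=0$. There the vacuum determinant $\det\mathbf{D}_0(\omega,k)$ has a simple root in $\omega$ at $k=0$. By the mirror structure of \cref{prop:Dconjugate}, the vacuum branch is locally $k^2\simeq\alpha(\omega-\omega_{\rm c})$ with real $\alpha>0$ for a generic cut-off. Hence for $\omega<\omega_{\rm c}$ the roots are purely imaginary and lie off $\Omega_{\rm obs}$. Near $k=0$ we have $k_z\to\omega/c_f\neq0$, so $Z$ is analytic and bounded in a neighbourhood of the cut-off. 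The fluid term is therefore a regular perturbation of order $\rho_f$.

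The second step applies the implicit function theorem to $F(\omega,k,\rho_f)=\det\mathbf{D}=0$, treated as an equation in $k^2$ perturbed by an odd-in-$k$ part coming from the one-sided (or asymmetric) loading. This should give a perturbed root of the form $k(\omega)\simeq\sqrt{\alpha(\omega-\omega_{\rm c})+i\rho_f\beta(\omega)}$, where $\beta$ is real and nonzero because $Z=-i\rho_f\omega^2/k_z$ is purely imaginary for real positive $k_z$, i.e.\ it acts as radiation damping. The branch cut of the vacuum square root opens into a smooth complex curve. I will select the root with $\Im k>0$ using causality and the sign of energy flux: the radiated power is positive, so the branch carrying positive group velocity decays. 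At $\omega=\omega_{\rm c}$ we get $k=O(\sqrt{\rho_f})$ with a nonzero real part, so $v_p=\omega/\Re k$ is large. In this neighbourhood $k_z\approx\omega/c_f$, which has positive real part; after the small imaginary shift, checking the branch choice in \cref{eq:kz_def} gives $\Im k_z>0$, leaky and supersonic. The same branch continues analytically above $\omega_{\rm c}$ into the fluid-loaded version of the vacuum mode, which gives continuity of identity.

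The third step continues this branch downward in real $\omega$. Analyticity holds as long as we avoid EPs (generic, by \cref{prop:signature}, since they sit off the real axis) and avoid $\mathcal{S}$. We need two facts along the way: $\Im k$ does not change sign, and $|k_z|$ stays away from zero. For the first, I would use an energy-balance identity derived from $\mathbf{q}^H\mathbf{D}\mathbf{q}=0$. Its imaginary part should relate $\Im k$ times the axial power flux to the radiated power $\propto\rho_f\Re(1/k_z)$, which is positive while $v_p>c_f$. For the second, while $\Im k>0$ and $\Re k<\omega/c_f$, we have $k_z^2=\omega^2/c_f^2-k^2$ with $\Im(k^2)\neq0$, so $k_z\neq0$. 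Define $\omega_*$ as the first frequency below $\omega_{\rm c}$ at which $\Re k=\omega/c_f$. At that point $\Im k>0$ still forces $k_z^2=-2i\Re k\,\Im k+(\Im k)^2\neq0$, so the branch remains a finite distance from $\mathcal{S}$. Below $\omega_*$ we have $|\Re k|>\omega/c_f$ with $\Im k>0$, which is by definition improper and excluded from $\Omega_{\rm obs}$.

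The main obstacle is the existence of $\omega_*$: showing that $\Re k$ actually reaches the sound line rather than, say, the branch meeting an EP or $\Re k$ returning to zero. I would first argue it heuristically from the vacuum: the continued root tends to the purely imaginary evanescent vacuum branch as $\rho_f\to0$. A weak-loading expansion would then give $\Re k\sim\rho_f\beta/(2\sqrt{\alpha(\omega_{\rm c}-\omega)})$, which grows as $\omega$ decreases. In addition, $v_p\to0$ is forced as $\omega\to0$ if the branch persists there, so by continuity $\omega/\Re k$ crosses $c_f$. If this fails for some mode, I would state the proposition conditionally: if $\omega_*$ exists, the conclusions hold; otherwise the extension reaches $0$.
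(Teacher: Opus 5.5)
Your route is essentially the paper's: perturb the evanescent vacuum root at the cut-off, continue downward in real $\omega$, and stop where $v_{\rm p}=c_f$. The paper only narrates this, and making it quantitative exposes two steps that fail.

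\textbf{The sign check for $k_z$.} The first failing step is ``$k_z\approx\omega/c_f$ \dots\ gives $\Im(k_z)>0$''. Because $\Im Z=-\rho_f\omega^2\Re(k_z)/|k_z|^2$ and $\alpha>0$, your coefficient $\beta$ is a positive multiple of $\Re(k_z)$. At real $\omega$ one also has $2\Re(k_z)\Im(k_z)=\Im(k_z^2)=-\Im(k^2)=-\rho_f\beta$. Together these force $\Im(k_z)<0$ for every complex root near the cut-off, whichever square-root determination you use. With $k_z\approx+\omega/c_f$ you get $\Im(k)>0$ but $\Im(k_z)<0$, which is the classical transversely growing forward leaky wave. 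Imposing the branch \eqref{eq:kz_def} instead moves $k_z$ to near $-\omega/c_f$, reverses the sign of $\beta$, and the forward root then has $\Im(k)<0$.

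Your energy identity says the same thing beyond perturbation theory. Write $\mathbf{q}^{\dagger}\mathbf{K}_2\mathbf{q}=i\tau$ and let $w$ be the normal displacement at the loaded face. Then $\mathbf{q}^{\dagger}\mathbf{D}\mathbf{q}=0$ gives $\Im(k)\,\bigl(2\Re(k)\,\mathbf{q}^{\dagger}\mathbf{K}_3\mathbf{q}-\tau\bigr)=\rho_f\omega^2|w|^2\Re(k_z)/|k_z|^2$. So attenuation of a forward wave requires $\Re(k_z)>0$. Since $\Im(k_z^2)=-2\Re(k)\Im(k)<0$, this gives $\Im(k_z)<0$. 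Positivity of the radiated power is a property of the determination of $k_z$, not a consequence of $v_{\rm p}>c_f$. Hence no root near the cut-off has both $\Im(k)>0$ and $\Im(k_z)>0$ at weak loading. The paper simply asserts that these signs hold ``automatically''. Your check is exactly where that assertion breaks, and a provable version must characterize the added segment by outgoing phase, $\Re(k_z)>0$, rather than by $\Im(k_z)>0$.

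\textbf{Existence of $\omega_*$.} This is the second gap, and your heuristic for it points the wrong way. The expansion $\Re(k)\simeq\rho_f\beta/\bigl(2\sqrt{\alpha(\omega_{\rm c}-\omega)}\bigr)$ \emph{decreases} as $\omega$ moves below $\omega_{\rm c}$. Exactly, the local root has $\Re(k)=\sqrt{(\sqrt{a^2+b^2}-a)/2}$ with $a=\alpha(\omega_{\rm c}-\omega)$ and $b=\rho_f\beta$, which decreases with $a$. So $v_{\rm p}$ climbs from $O(\rho_f^{-1/2})$ at $\omega_{\rm c}$ to $O(\rho_f^{-1})$ below it, away from $c_f$. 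This is the opposite of the paper's unargued statement that $\Re k$ increases.

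A sound-line termination therefore cannot come from the local analysis at the cut-off. It needs finite loading or global features of the evanescent vacuum spectrum. One example is the continued root leaving the imaginary axis at a real-frequency coalescence with another evanescent root, after which $\Re(k)=O(1)$. Your continuation argument does not control branch identity through such an event. The claim $v_{\rm p}\to0$ as $\omega\to0$ has the same status: a root that stays near the imaginary axis has $\Re(k)=O(\rho_f\omega^2)$, because $Z=O(\omega^2)$, and hence $v_{\rm p}\to\infty$. Even granting a first crossing, ``improper for all $\omega<\omega_*$'' also needs transversality and no re-entry. So your conditional fallback is the most the argument supports, and it does not deliver the interval $[\omega_*,\infty)$.

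\textbf{Two smaller points.} Evenness in $k$ follows from $\mathbf{D}(\omega,-k,k_z)=\mathbf{D}(\omega,k,k_z)^{\mathsf T}$, which holds for any loading, so one-sided loading introduces no odd-in-$k$ part. Also, $\beta\neq0$ requires $w\neq0$ at the cut-off, which fails for thickness-shear-type cut-offs.
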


\subsection{Systematic recovery by vacuum‑anchored seed selection and continuation}
\label{sec:downward_continuation}

Conventional fixed‑frequency root solvers initialize their search at cut‑off frequencies assuming $k=0$ or $v_{\rm p}\to\infty$; they have no mechanism to discover solutions that terminate at finite phase velocity. Our framework instead anchors every branch in the complete, Hermitian vacuum spectrum and classifies each vacuum mode by its phase velocity relative to the fluid sound speed $c_f$ within the observation window $[0,\omega_{\max}]$:

\begin{enumerate}[label=(\roman*)]
    \item \textbf{Entirely supersonic:} $v_{\rm p}(\omega)>c_f$ for all $\omega\in[\omega_{\rm c},\omega_{\max}]$. The mode is leaky at all frequencies in the window. A single seed is placed at $\omega_{\max}$.
    \item \textbf{Entirely subsonic:} $v_{\rm p}(\omega)<c_f$ for all $\omega\in[\omega_{\rm c},\omega_{\max}]$. The mode is trapped at all frequencies in the window. A single seed is placed at $\omega_{\max}$.
    \item \textbf{Straddling the sound line:} the vacuum branch crosses $\omega=c_f k$ at some frequency $\omega_{\rm cross}\in(\omega_{\rm c},\omega_{\max})$. One vacuum mode produces \emph{two} physically distinct fluid‑loaded branches. Two seeds are required: one placed just above $\omega_{\rm c}$ (to capture the downward extension toward the sound line), and one at $\omega_{\max}$ (to capture the supersonic continuation).
\end{enumerate}

Seeds in classes (i) and (ii) are placed at $\omega_{\max}$---the upper end of the observation window---because the vacuum eigenvalue there is well separated from neighboring branches and far from both the cut‑off and the sound line, providing a robust starting point for homotopy. For straddling modes (class iii), the seed near $\omega_{\rm c}$ is essential: it anchors the sub‑cut‑off extension that terminates at the sound line, which would be inaccessible from a seed placed solely at high frequency.

With seeds selected, the inter‑manifold transport proceeds as follows:
\begin{enumerate}[label=(\arabic*)]
    \item \textbf{Density homotopy.} The fluid density is ramped from $0$ to $\rho_f$ along a complex path in the homotopy parameter $s$, transporting each vacuum seed $(k_{\rm vac},\mathbf{q}_{\rm vac})$ onto the fluid‑loaded manifold while preserving analytic branch identity (\cref{app:homotopy}).
    \item \textbf{Frequency continuation.} Starting from each seed at $s=1$, the augmented system \cref{eq:augmented_app} is solved with $\omega$ as the continuation parameter. For seeds at $\omega_{\max}$, continuation proceeds downward toward the cut‑off and, if the branch so permits, beyond it toward the sound line. For the low‑frequency seed of a straddling mode, continuation proceeds both upward (to connect with the supersonic segment) and downward (to track the extension to the sound line). The step size is controlled by the MAC criterion \cref{eq:MAC_app}. Continuation terminates when:
    \begin{enumerate}
        \item the branch reaches the sound line ($v_{\rm p}=c_f$), below which the phase velocity becomes subsonic and the solution turns improper; or
        \item $\Im(k)$ changes sign, signaling exit from the physical sector.
    \end{enumerate}
\end{enumerate}

This strategy mirrors the physics of \cref{prop:extension}: the observable branch is accessed from its well‑characterized Hermitian anchor, and the loading‑induced extension below $\omega_{\rm c}$ is recovered automatically, without \emph{a priori} knowledge of the sound‑line intersection.

\begin{remark}[What this procedure does not guarantee]
\label{rem:incompleteness}
The procedure recovers every branch whose vacuum ancestor has a cut‑off frequency within the seeded range. Modes whose cut‑offs lie above $\omega_{\max}$, and which might also extend to the sound line, are not captured. The physical spectrum of an open waveguide is infinite‑dimensional; no finite‑frequency search can claim exhaustiveness. What is guaranteed is that every branch traced carries a topologically validated identity and that, within the seeded frequency band, the membership of $\Omega_{\rm obs}$ is complete \emph{relative to the traced vacuum ancestors}.
\end{remark}

\subsection{Relation to Scholte waves}
\label{sec:scholte_relation}

The vacuum classification also predicts when trapped Scholte waves emerge without a vacuum counterpart. A mode of class (ii) (entirely subsonic in the observation window) corresponds under fluid loading to a trapped branch that remains subsonic down to arbitrarily low frequencies. If the fundamental symmetric mode S$_0$ satisfies $v_{\rm p}(\omega)<c_f$ throughout the window, then under double‑sided loading it gives rise to the familiar symmetric Scholte wave---a purely bound interface mode with $v_{\rm p}\lesssim c_f$ at low frequencies. This Scholte wave \emph{does} possess a vacuum ancestor (the S$_0$ mode) and is recovered by the seed‑and‑continuation procedure without any special treatment.

The only case requiring a direct search is when the vacuum S$_0$ mode is entirely supersonic ($v_{\rm p}>c_f$ for all $\omega$) \emph{and} loading is double‑sided. In this situation, which can occur for very soft solids or very light fluids, the symmetric Scholte wave has no vacuum counterpart. It is an emergent trapped mode, located by a direct low‑frequency root search on the fully loaded system, as described in \cref{sec:emergent_computation}. For the composite laminates and water loading considered in \cref{sec:validation}, the vacuum S$_0$ mode is subsonic over most of the observation window, so the symmetric Scholte wave is captured naturally by the seed‑and‑continuation framework; only the antisymmetric Scholte wave under single‑sided loading requires the emergent‑mode treatment.

\subsection{Observable consequence: branches systematically missed by conventional solvers}
\label{sec:missing_branches}

The standard approach to computing fluid‑loaded dispersion—a fixed‑frequency root search initialized with $k=0$—implicitly assumes that every mode starts its existence at infinite phase velocity. The analysis of \cref{sec:interval_deformation} shows that this assumption is false: an entire class of physical branches originates at the sound line, at finite phase velocity, and possesses no real‑wavenumber vacuum counterpart in the frequency interval where they appear. Consequently, solvers that rely on vacuum cut‑off initialization skip these branches without any diagnostic.

In \cref{sec:validation} we demonstrate the recovery of multiple sound‑line‑terminating branches for both symmetric and asymmetric composite laminates under water loading. Direct comparison with a widely used open‑source solver confirms that these branches are absent from its output, corroborating the topological, rather than numerical, origin of the omission.

\subsection{Connection to the second restructuring mechanism}
\label{sec:connection_to_EP}

The expansion of $\Omega_{\rm obs}$ changes \emph{which} branches are visible; it does not alter \emph{how} they connect. The internal topology of the observation space—the pattern of veerings and crossings among the now‑extended branches—remains governed by the exceptional points of the dispersion manifold. The migration of those EPs across the admissibility boundaries, which rewires branch connectivity without changing branch membership, is the subject of \cref{sec:migration}. Together, the two mechanisms constitute a complete picture of the restructuring of the physical spectrum of a fluid‑loaded waveguide.

\section{Riemann-sheet migration of EPs and the decoupling of physical branch segments}
\label{sec:migration}

The exceptional points that organize avoided crossings in the vacuum spectrum are not static: when fluid loading is applied, they move continuously on the dispersion manifold $\mathcal{M}$. Whether this migration alters the observable spectrum, however, depends fundamentally on the spectral sector in which the EP resides.

Vacuum avoided crossings come in two topological classes, distinguished by the regime of the controlling EP pair. \emph{Trapped} (subsonic) EPs have real $k$ and purely imaginary $k_z$; the fluid impedance is real, the operator is Hermitian for real frequencies, and the pair is confined to $\Im(k)=0$, $\Im(k_z)>0$ for all fluid densities. Such EPs can never cross an admissibility boundary, and the veerings they govern are permanent features of the observable spectrum. \emph{Leaky} (supersonic) EPs have complex $k$ and complex $k_z$; the operator is genuinely non-Hermitian, and the pair can migrate across the admissibility boundaries $\Im(k)=0$ or $\Im(k_z)=0$ as $\rho_f$ increases.

In this section we establish the second mechanism by which fluid loading restructures the physical spectrum---the migration of \emph{leaky} EP pairs out of the physical sector, and the consequent loss of topological coupling between the physical segments of the two branches they previously connected. This mechanism explains why \emph{certain} vacuum avoided crossings are erased under loading---specifically, those whose controlling EPs are leaky---and why, among those, the narrowest veerings disappear first. Trapped EPs and their associated veerings are unaffected by this mechanism and persist for all loading densities.

\subsection{EP trajectories and the physical sector boundary}
\label{sec:EP_trajectories}

We restrict attention to \emph{leaky} EP pairs: those for which, at zero loading, $\Im(k)>0$ and both $k$ and $k_z$ are complex. Trapped EPs are confined to the Hermitian region and do not cross admissibility boundaries; they are excluded from the migration analysis that follows.

With fluid density $\rho_f$ admitted as a free parameter, the EP condition comprises four real equations in the five real variables $(\Re\omega,\Im\omega,\Re k,\Im k,\rho_f)$; its solutions therefore form one‑parameter families. As loading is applied, a leaky EP traces a \emph{continuous trajectory} $(\omega_{\mathrm{EP}}(\rho_f), k_{\mathrm{EP}}(\rho_f), k_{z,\mathrm{EP}}(\rho_f))$ on $\mathcal{M}$. The physical sector $\mathcal{M}_{\mathrm{phys}}$ is bounded by the admissibility hypersurfaces
\begin{equation}
    \mathcal{B}_k = \{ \Im(k)=0 \}, \qquad
    \mathcal{B}_{k_z} = \{ \Im(k_z)=0 \},
    \label{eq:boundaries_migration}
\end{equation}
each of codimension one within the EP trajectory. Consequently, a leaky EP family crosses a boundary at an isolated value of $\rho_f$.

Two independent crossing events are therefore possible along an EP locus:
\begin{enumerate}[label=(\roman*)]
    \item $\Im(k_{\mathrm{EP}})=0$: the EP meets the causality boundary;
    \item $\Im(k_{z,\mathrm{EP}})=0$: the EP meets the radiation boundary, leaving the outgoing sheet.
\end{enumerate}
Either event alone removes the EP pair from $\mathcal{M}_{\mathrm{phys}}$. Once outside the physical sector, the pair no longer lies within the region of the Riemann surface sampled by the physical observation space $\Omega_{\rm obs}$. The two physical segments of the branches, previously coupled by the EP, become independent curves within $\Omega_{\rm obs}$. As a consequence, they may intersect freely on the real frequency axis, with uncorrelated imaginary parts—the observable signature that the EP no longer constrains the two physical segments. This is the topological mechanism of the veering disappearance described in \cref{sec:migration}.

If an EP family reaches $\Im(\omega)=0$ while still inside $\mathcal{M}_{\mathrm{phys}}$, a genuine degeneracy becomes directly observable: the two branches intersect at a real frequency with vertical tangents. This event, however, requires the EP to remain within the physical sector up to the real axis, which is non‑generic; in the examples of \cref{sec:validation}, the EPs that erase vacuum veerings leave the physical sector before their frequencies become real.

\begin{remark}[Simultaneous boundary crossing]
\label{rem:simultaneous_migration}
A simultaneous occurrence of $\Im(k)=0$ and $\Im(k_z)=0$ forces $k$ and $k_z$ to be real; the constraint $k_z^2+k^2=\omega^2/c_f^2$ then requires $\omega$ to be real. Such a point lies on the intersection of the two admissibility boundaries with the real frequency axis, corresponding to an EP that exits $\mathcal{M}_{\mathrm{phys}}$ precisely as it becomes observable. This simultaneous event has codimension two and is not necessary for the veering‑to‑crossing transition.
\end{remark}

\begin{remark}[Why EP exit decouples the physical segments]
\label{rem:EP_exit_geometry}
A global dispersion branch extends continuously across the boundary $\Im(k_z)=0$, connecting its physical segment to its non‑physical continuation without approaching the square‑root branch point $k_z=0$. Two global branches are linked by a conjugate EP pair. When the EP pair resides on the physical sheet, the two physical segments lie on branches that exchange identity upon encircling the EP—they are topologically coupled. When the pair migrates to the non‑physical sheet, the two global branches remain linked by the EP, but the connection now lies outside the physical observation space $\Omega_{\rm obs}$. Within $\Omega_{\rm obs}$, the two physical segments belong to branches that, \emph{as sampled in the observation space}, are no longer forced to exchange identity. They become independent curves and may intersect freely.
\end{remark}

\subsection{The veering-to-crossing transition}
\label{sec:veering_to_crossing}

The consequence of the boundary crossing for the observable spectrum is formalized as follows.

\begin{proposition}[Veering-to-crossing transition by EP exit]
\label{prop:veering_crossing}
Let two physical branches be connected by a conjugate EP pair that resides in $\mathcal{M}_{\mathrm{phys}}$ at zero loading (the vacuum avoided crossing). Then:
\begin{enumerate}[label=(\roman*)]
    \item While the pair remains in $\mathcal{M}_{\mathrm{phys}}$, the two branches exhibit real‑wavenumber veering along the real frequency axis (\cref{prop:signature}).
    \item If, at some loading $\rho_f^*$, the pair crosses $\mathcal{B}_k$ or $\mathcal{B}_{k_z}$ and enters a non‑physical region of $\mathcal{M}$, the two physical branch segments within $\Omega_{\rm obs}$ are no longer connected by any EP. They become independent curves and may intersect freely on the real frequency axis, with uncorrelated imaginary parts. No real‑wavenumber veering occurs, and the two modes exchange neither attenuation nor energy velocity at the intersection—because no topological constraint forces such an exchange.
\end{enumerate}
\end{proposition}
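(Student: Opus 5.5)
The argument has two parts. Part (i) follows from results already established, combined with a continuity argument in $\rho_f$. Part (ii) is a monodromy argument restricted to the portion of the Riemann surface sampled by $\Omega_{\rm obs}$.

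For (i), I will fix any loading $\rho_f<\rho_f^*$ at which the conjugate pair $P_{\mathrm f},P_{\mathrm b}$ still lies in $\mathcal{M}_{\rm phys}$. I will use the codimension count of \cref{sec:EP_trajectories} to argue that $\Im\omega_{\mathrm{EP}}(\rho_f)\neq0$ generically along the trajectory. At $\rho_f=0$ the EPs controlling a vacuum avoided crossing lie off the real axis, and \cref{sec:EP_trajectories} classes reaching $\Im\omega=0$ inside $\mathcal{M}_{\rm phys}$ as non-generic. The hypotheses of \cref{prop:signature} then hold, and its conclusion gives real-wavenumber veering directly: trapped pairs give (i) of \cref{prop:signature}, and leaky pairs give (ii) of \cref{prop:signature}. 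I will also record the gap estimate \cref{eq:gap_EP_distance}. It shows that the veering persists with a gap of order $|\Im\omega_{\mathrm{EP}}(\rho_f)|$, which depends continuously on $\rho_f$.

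For (ii), I will argue in three steps.
\begin{enumerate}
\item I will continue the two physical branches $k_A(\omega),k_B(\omega)$, $\omega\in[\omega_a,\omega_b]$, analytically into a complex neighbourhood of the real interval. I will then define the region $U\subset\mathcal{M}$ swept by the real-$\omega$ physical segments together with the admissible complex deformations needed to close contours. This region is bounded by $\mathcal{B}_k$ and $\mathcal{B}_{k_z}$.
\item The identity exchange of \cref{sec:signature} came from a contour in $U$ enclosing exactly one EP. For $\rho_f>\rho_f^*$ the pair lies outside $\mathcal{M}_{\rm phys}$, so no contour in $U$ encircles it. Because $\mathcal{S}$ is not touched (\cref{rem:EP_exit_geometry}), the restriction of the two-sheeted cover to $U$ has trivial monodromy. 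The two branches are therefore independent single-valued analytic functions on $U$.
\item With no monodromy constraint, $\Re k_A-\Re k_B$ is a real-analytic function of $\omega$ not forced to keep a sign, so generic zeros (crossings) are allowed. At such a zero, $\Im k_A$ and $\Im k_B$ are values of two unrelated analytic functions and need not coincide or interchange. Attenuation and energy velocity are then carried continuously through the crossing by each branch, so neither is exchanged.
\end{enumerate}

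The main obstacle is step 2, which is making precise that exit of the EP from $\mathcal{M}_{\rm phys}$ removes it from every contour relevant to $\Omega_{\rm obs}$. The global branches remain linked by the EP on $\mathcal{M}$, so I must show the obstruction is local to the sector. I would handle this by parametrising contours through $k_z$ and noting that $\Im k_z$ (or $\Im k$) has a fixed sign on $U$ but the opposite sign at the EP. Any homotopy of a loop in $U$ to one around the EP would then have to cross $\mathcal{B}_{k_z}$ or $\mathcal{B}_k$, leaving the observation space. The crossing-freedom statement is a \emph{may} statement, so it needs only the absence of constraint, not an existence proof of intersections.
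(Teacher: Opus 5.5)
Your decomposition is the paper's own: (i) is read off from \cref{prop:signature}, and (ii) rests on the claim that once the pair leaves $\mathcal{M}_{\rm phys}$, nothing constrains the two segments seen in $\Omega_{\rm obs}$. The paper's proof asserts this in a single paragraph. Your step~2 tries to justify it, and that is exactly where your argument fails; the paper does not address the problem either.

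What decides veering versus crossing are the real-axis values $k_A(\omega)$, $k_B(\omega)$. These are restrictions of global analytic functions whose continuation off the real axis is unique and does not know about $U$. The hypersurfaces $\mathcal{B}_{k_z}$ and $\mathcal{B}_k$ are not singular for the augmented system; only $k_z=0$ is, which is precisely what you import from \cref{rem:EP_exit_geometry}. So for $\rho_f>\rho_f^*$ the continuation of $k_A,k_B$ from the real interval still reaches the same EP, and the local expansion $k_{A,B}\approx k_{\mathrm{EP}}\pm c\sqrt{\omega-\omega_{\mathrm{EP}}}$ still governs $k_A-k_B$ on the real axis as it did before.

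In particular, the gap estimate \cref{eq:gap_EP_distance} that you record in (i) involves only $\Im\omega_{\mathrm{EP}}$. That quantity is continuous and nonzero at $\rho_f^*$, so the veering cannot collapse when the EP touches $\mathcal{B}_{k_z}$ or $\mathcal{B}_k$, and the assertion that no real-wavenumber veering occurs does not follow. Even granting that no loop in $U$ encircles the EP, this places no condition on the values $k_A(\omega)$, $k_B(\omega)$. Two further points:
\begin{enumerate}
\item The restriction of the two-sheeted cover to $U$ is not a covering of any $\omega$-domain, because the boundaries cut the two sheets at different places. Its monodromy is therefore not defined.
\item The contour behind the exchange in \cref{sec:signature} is the real interval closed by a large semicircle, and nothing confines it to $\mathcal{M}_{\rm phys}$. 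So the contrast you draw between (i) and (ii) is not the one the paper used.
\end{enumerate}

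The local expansion also tells you what the boundary crossing actually does. Since $k_{z,A}+k_{z,B}=2k_{z,\mathrm{EP}}+O(\omega-\omega_{\mathrm{EP}})$, once $\Im k_{z,\mathrm{EP}}$ is negative by more than a quantity of order $|\Im\omega_{\mathrm{EP}}|$, at least one of the two real-axis branches near the interaction itself has $\Im k_z<0$. The event therefore removes part of a segment from $\Omega_{\rm obs}$; it does not free two physical segments to cross.

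What converts real-part veering into real-part crossing is the position of the EPs relative to the real frequency axis. Take a local two-mode model $k_A-k_B=2\sqrt{(\delta+ig)^2+\kappa^2}$ with detuning $\delta=\alpha(\omega-\omega_0)$, $\alpha>0$, real loss contrast $g\neq0$ and complex coupling $\kappa$. Its two EPs are $\delta=i(-g\pm\kappa)$.
\begin{itemize}
\item When the two EPs lie on opposite sides of the real axis, the real parts veer and the imaginary parts cross.
\item When both lie on the same side, the real parts cross and the imaginary parts stay apart.
\end{itemize}
The switch occurs when an EP passes through $\Im\omega=0$, and the admissibility of its $(k,k_z)$ coordinates never enters.

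This also undercuts your part~(i). By the very codimension count you cite, for a leaky pair $\Im\omega_{\mathrm{EP}}(\rho_f)=0$ is one real condition on a one-parameter family. It is therefore met at isolated $\rho_f$ just as generically as $\Im k_{z,\mathrm{EP}}=0$, so genericity cannot keep $\Im\omega_{\mathrm{EP}}\neq0$ on all of $[0,\rho_f^*)$.

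To make (ii) a theorem you would need an extra hypothesis tying the admissibility-boundary crossing to an EP changing side of the real $\omega$-axis. Neither your proposal nor the paper's proof supplies one, so the step you flagged as the main obstacle is a genuine gap rather than a technicality.
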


\begin{proof}
Part (i) is exactly \cref{prop:signature}. For part (ii), once the pair leaves $\mathcal{M}_{\mathrm{phys}}$, the EP that previously linked the two global branches lies outside the physical observation space $\Omega_{\rm obs}$. Within $\Omega_{\rm obs}$, the two physical segments are no longer constrained by any degeneracy of the physical eigenvalue problem. They behave as independent analytic curves on the real frequency axis and may intersect with a finite angle; their $\Im(k)$ values are independent because no topological constraint forces them to swap. The coupling that produced the avoided crossing is lifted by the departure of the EP from the region sampled by $\Omega_{\rm obs}$.
\end{proof}

\Cref{prop:veering_crossing} provides the diagnostic already used in \cref{sec:signature}: a true crossing with uncorrelated imaginary parts signals the absence of a physical‑sector EP pair.

\subsection{Selectivity: why narrow-gap veerings vanish first}
\label{sec:selectivity}

Vacuum avoided crossings exhibit a wide range of gap sizes. The gap $\Delta k_{\min} = \min_\omega|\Re(k_+(\omega)-k_-(\omega))|$ is controlled by the distance $|\Im(\omega_{\mathrm{EP}})|$ of the controlling EP pair from the real frequency axis. A narrow veering corresponds to an EP pair that is close to the real axis and, typically, also close to the admissibility boundaries $\mathcal{B}_k$ or $\mathcal{B}_{k_z}$.

When $\rho_f$ is increased, the EP moves. The distance a pair must travel before crossing a boundary is roughly proportional to its initial margin to that boundary. Hence pairs with the smallest margin---those associated with the narrowest vacuum veerings---exit $\mathcal{M}_{\mathrm{phys}}$ at the smallest loading densities. Wide‑gap pairs, lying deep inside the physical sector, either remain within $\mathcal{M}_{\mathrm{phys}}$ for all physically attainable $\rho_f$ (and the associated branch segments remain coupled, preserving the veering) or require much higher densities to cross the boundary.

\begin{proposition}[Loading order of veering disappearance]
\label{prop:order}
Consider a set of vacuum avoided crossings ordered by increasing veering gap $\Delta k_{\min}$. Under monotonic increase of $\rho_f$, the crossings disappear in order of increasing gap: the narrowest veering is erased first, the widest last (or never).
\end{proposition}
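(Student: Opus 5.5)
The plan is to reduce the ordering statement to a comparison of first-exit densities along the one-parameter EP families of \cref{sec:EP_trajectories}. For each vacuum avoided crossing $j$, let $P_{\mathrm f}^{(j)}(\rho_f)$ denote its controlling leaky EP. By \cref{prop:EPpair} it suffices to track one member of the pair, since the conjugate partner has the same $\Im k$ and $\Im k_z$ and therefore exits simultaneously. Define the exit density
\[
\rho_j^* = \inf\{\rho_f>0 : \Im k_{\mathrm{EP}}^{(j)}(\rho_f)=0 \ \text{or}\ \Im k_{z,\mathrm{EP}}^{(j)}(\rho_f)=0\},
\]
with $\rho_j^*=\infty$ if neither boundary is ever reached. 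By \cref{prop:veering_crossing}, veering $j$ is erased exactly at $\rho_j^*$. The claim is therefore equivalent to showing that $\rho_j^*$ is monotone in $\Delta k_{\min}^{(j)}$.

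The first step links the gap to EP geometry. Using \cref{eq:gap_EP_distance}, write $\Delta k_{\min}^{(j)} \simeq 2|c_2^{(j)}|\,|\Im\omega_{\mathrm{EP}}^{(j)}(0)|$. Next, I would argue that the margin to the admissibility boundaries,
\[
m_j = \min\bigl(\Im k_{\mathrm{EP}}^{(j)}(0),\ \Im k_{z,\mathrm{EP}}^{(j)}(0)\bigr),
\]
is controlled by $|\Im\omega_{\mathrm{EP}}^{(j)}(0)|$. The route is to linearize the analytic dispersion map near the real axis: a small imaginary frequency offset produces imaginary parts of $k$ and $k_z$ of comparable size, scaled by the group-velocity-type derivatives $\partial k/\partial\omega$ at the EP. This gives $m_j \approx C_j\,|\Im\omega_{\mathrm{EP}}^{(j)}(0)| \propto \Delta k_{\min}^{(j)}$.

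The second step bounds the exit density. Treat the EP trajectory as differentiable in $\rho_f$, which follows from the implicit function theorem applied to the four real EP equations in five unknowns. Then $m_j(\rho_f) \ge m_j - V_j\rho_f$, where $V_j$ bounds the velocity $|d\,\Im k_{\mathrm{EP}}/d\rho_f|$ and $|d\,\Im k_{z,\mathrm{EP}}/d\rho_f|$. This yields $\rho_j^* \gtrsim m_j/V_j$, and a matching upper estimate holds when the motion is directed toward the boundary. If $V_j$ and $C_j$ are assumed comparable across the veerings considered, then $\rho_j^*\propto\Delta k_{\min}^{(j)}$, and the ordering follows. Crossings whose EP moves away from the boundaries, or saturates before reaching them, give $\rho_j^*=\infty$, which is the "or never" clause.

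The main obstacle is the uniformity assumption: that $C_j$ and $V_j$ are comparable for different crossings. In general the migration velocity depends on the modal overlap with the fluid impedance term $\mathbf{H}$, so the statement is only a generic, leading-order ordering rather than a strict theorem. My first attempt would be to compute $d\,\Im k_{\mathrm{EP}}/d\rho_f$ by first-order perturbation of the defective Jordan pair. The aim is to show that this derivative scales like $Z$ evaluated at the EP, which is roughly mode-independent within a narrow frequency band. The proposition would then be stated as holding under that comparability hypothesis, with the strict order guaranteed whenever the gap ratios dominate the ratios of migration velocities.
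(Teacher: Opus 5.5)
Your proposal follows the paper's heuristic argument essentially step for step: the Puiseux link $\Delta k_{\min}\propto|\Im\omega_{\mathrm{EP}}|$, the initial margin to $\mathcal{B}_k$/$\mathcal{B}_{k_z}$, and a smooth $\rho_f$-trajectory so that the smaller margin is exhausted first; you also state the margin--gap proportionality and comparable migration rates as explicit hypotheses, which the paper's ``heuristic argument'' relies on but leaves implicit, so neither version is a strict theorem. One fix to your linearization: evaluate $\partial k/\partial\omega$ at the veering centre $\omega=\Re\omega_{\mathrm{EP}}$ on the real axis (equivalently, use the analytic mean branch $(k_A+k_B)/2$), because each individual branch has a square-root singularity, and hence an infinite slope, at the EP itself.
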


\begin{proof}[Heuristic argument]
The initial distance of an EP pair to the nearest admissibility boundary is a continuous function of the pair's coordinates. For a pair close to the real frequency axis, the Puiseux expansion gives $\Delta k_{\min} \propto |\Im(\omega_{\mathrm{EP}})|$. Meanwhile, the proximity to, say, $\mathcal{B}_{k_z}$ can be parameterised by $|\Im(k_z)|$. Because the trajectory in $\rho_f$ is smooth, a pair with a smaller initial margin reaches zero at a smaller $\rho_f$, triggering the transition earlier. A fully quantitative version requires tracking the specific EP trajectory, which is done numerically in \cref{sec:validation}.
\end{proof}

\Cref{prop:order} resolves the selectivity puzzle posed in the Introduction: only the narrowest veerings vanish under weak loading, while wider ones survive. The mechanism is topological, independent of the details of the laminate, the fluid, or the discretization.

\subsection{Numerical tracking of the migration path}
\label{sec:migration_path}

To make the migration concrete, we outline the numerical tracking of an EP along a density homotopy. The augmented system \eqref{eq:augmented_app} is extended by the degeneracy condition (vanishing of the discriminant of the two nearly degenerate eigenvalues), yielding a closed system for $(\omega_{\mathrm{EP}}, k_{\mathrm{EP}}, k_{z,\mathrm{EP}})$ at each $\rho_f$. The solution is initialized from the vacuum EP (located by solving the two‑eigenvalue problem for the analytically continued vacuum operator) and continued with $\rho_f$ as the parameter.

For an EP pair that controls a narrow vacuum veering, the typical migration is as follows. At $\rho_f=0$, both partners lie on the physical sheet with $\Im(k)>0$, $\Im(k_z)>0$ and $\Im(\omega_{\mathrm{EP}})\neq0$ (one partner with $\Re k>0$, the other with $\Re k<0$). As $\rho_f$ increases, $\Im(k_z)$ of the tracked partner decreases monotonically; its conjugate behaves symmetrically. At a critical density $\rho_f^*$, $\Im(k_z)$ reaches zero. Beyond $\rho_f^*$, the pair resides in $\Im(k_z)<0$, and the two physical branch segments within $\Omega_{\rm obs}$ are no longer coupled by the EP and are observed to intersect freely on the real axis. This evolution is illustrated in \cref{fig:unsym_double} of \cref{sec:validation}.

In the subsonic regime, where $k_z$ is purely imaginary, the boundary $\Im(k_z)=0$ coincides with $k_z=0$, i.e.\ the sound line. An EP can therefore leave the physical sector by reaching the sound line; this is observed for trapped modes near the Scholte‑wave limit.

\subsection{Relation to the expansion of the observation space}
\label{sec:relation_to_obs_migration}

The veering-to-crossing transition induced by EP migration is independent of the observation‑space expansion described in \cref{sec:observation_space}. The latter adds new segments to existing branches or reveals emergent modes, whereas the former rewires the connectivity among branches that are already present. The two mechanisms can operate concurrently: a branch extended to the sound line by the first mechanism may undergo a veering‑to‑crossing transition with a neighbour if the controlling EP pair migrates away. The full physical spectrum of a fluid‑loaded waveguide is the joint outcome of both mechanisms.

The numerical validation in \cref{sec:validation} illustrates the two mechanisms separately and in combination, for symmetric and asymmetric laminates under single‑ and double‑sided water loading. EP trajectories are tracked explicitly, the crossing of admissibility boundaries is demonstrated, and the predicted ordering of veering disappearance is confirmed.

\section{Conditional exceptional-point formation under mirror-symmetry breaking}
\label{sec:single_sided}

The two restructuring mechanisms established in \cref{sec:observation_space,sec:migration} concern the fate of EPs that already exist in the vacuum limit. Fluid loading can, however, also \emph{create} new EPs when it breaks the mirror symmetry of the plate. The perturbation-theoretic framework for mode veering in conservative elastic waveguides \cite{xiao_mode_2026} provides the basis for understanding how a symmetry-protected diabolic point in a Hermitian system unfolds when non-Hermitian coupling is introduced. In this section we determine the condition under which a symmetry-protected diabolic point unfolds into an exceptional point, and show that the outcome is not universal: it is generic in the subsonic (trapped) regime, yet conditional in the supersonic (leaky) regime.

\subsection{Symmetry-protected diabolic points in the vacuum plate}
\label{sec:vacuum_dp}

For a plate symmetric about its mid-plane, the vacuum operator commutes with the reflection $z\mapsto -z$. The modal space decomposes into symmetric (S) and antisymmetric (A) Lamb waves, and modes of opposite symmetry do not interact. Their intersections on the real $(\omega,k)$ plane are therefore symmetry-protected diabolic points (DPs)---true crossings where two real eigenvalues coincide without coupling.

When fluid loading is applied, two distinct scenarios arise depending on whether the loading preserves or breaks the mirror symmetry.

\subsection{Symmetry preservation under double-sided loading}
\label{sec:double_sided}

If identical fluids load both surfaces of a mid-plane symmetric plate, the reflection symmetry $z\mapsto -z$ is preserved. The fluid perturbation projected onto the S--A subspace becomes
\begin{equation}
    \mathbf{V}_f^{\text{(double)}} = Z_f
    \begin{pmatrix}
        u_S^2 + (-u_S)^2 & u_S u_A + (-u_S)u_A \\[2pt]
        u_S u_A + (-u_S)u_A & u_A^2 + u_A^2
    \end{pmatrix}
    = 2Z_f
    \begin{pmatrix}
        u_S^2 & 0 \\[2pt]
        0 & u_A^2
    \end{pmatrix},
    \label{eq:Vf_double}
\end{equation}
where the signs follow from the parity of the normal displacements at the loaded surfaces ($u_S$ odd, $u_A$ even). The off-diagonal terms cancel exactly: the S and A families remain decoupled, and their crossings survive as genuine diabolic points.

\begin{proposition}[Symmetry-protected crossings under double-sided loading]
\label{prop:double_sided}
For a mid-plane symmetric plate with identical fluids on both surfaces, S--A diabolic crossings are protected from EP formation. Intra-class EPs (e.g.\ S$_1$--S$_2$, A$_1$--A$_2$) persist and are shifted in the complex $(\omega,k)$ plane by the fluid loading, governing avoided crossings in both the subsonic and supersonic regimes. Fluid loading does not create intra-class EPs \emph{de novo}; it translates existing vacuum EPs and may alter their distance from the real frequency axis, thereby modifying the sharpness of the associated veering.
\end{proposition}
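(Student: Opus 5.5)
The plan is to work at the level of the full operator rather than the two-mode projection \cref{eq:Vf_double}, so that the conclusion is exact and not only first-order. Let $\mathbf{R}$ denote the discretized mirror operator $z\mapsto -z$, which acts on displacement components as $(u_x,u_y,u_z)(z)\mapsto(u_x,u_y,-u_z)(-z)$; for anisotropic laminas I would additionally assume the lay-up is mid-plane symmetric so that $\mathbf{R}$ commutes with $\mathbf{K}_1,\mathbf{K}_2,\mathbf{K}_3,\mathbf{M}$. The first step is to show that the fluid-loading term also commutes with $\mathbf{R}$ under double-sided loading with identical fluids. In the construction of \cref{app:SAFE} it has the form $Z(\omega,k_z)\,\mathbf{H}$ with $\mathbf{H}=\mathbf{e}_{\rm top}\mathbf{e}_{\rm top}^{T}+\mathbf{e}_{\rm bot}\mathbf{e}_{\rm bot}^{T}$, where $\mathbf{e}$ selects normal displacement at each surface. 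Since $\mathbf{R}$ swaps the two surfaces and flips the sign of $u_z$, we get $\mathbf{R}^{T}\mathbf{H}\mathbf{R}=\mathbf{H}$. The essential point is that the same scalar $Z(\omega,k_z)$, on the same radiation sheet, multiplies both terms. Hence $[\mathbf{D}(\omega,k,k_z),\mathbf{R}]=0$ on all of $\mathcal{M}$, for every $\rho_f$ and every complex $(\omega,k,k_z)$. This is the exact statement behind the vanishing off-diagonal entries in \cref{eq:Vf_double}.

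The second step is the block decomposition. Because $\mathbf{R}^2=\mathbf{I}$, the projectors $\mathbf{P}_\pm=(\mathbf{I}\pm\mathbf{R})/2$ split $\mathbf{D}=\mathbf{D}_S\oplus\mathbf{D}_A$, and therefore $\det\mathbf{D}=\det\mathbf{D}_S\det\mathbf{D}_A$. An EP is a defective point with a Jordan chain, $\mathbf{D}\mathbf{q}_0=0$ and $\mathbf{D}\mathbf{q}_1+\partial_k\mathbf{D}\,\mathbf{q}_0=0$. Since $\partial_k\mathbf{D}$ also commutes with $\mathbf{R}$, I will show the chain can be projected into a single parity block. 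If $\mathbf{q}_0$ were mixed, applying $\mathbf{P}_\pm$ would give chains in each block, so the coalescence lives within $\mathbf{D}_S$ or within $\mathbf{D}_A$. An S root and an A root are zeros of different factors. Where they coincide, the kernel is two-dimensional, spanned by one S vector and one A vector, and the structure is semisimple. No Jordan block couples them, so the point remains diabolic. This proves the first claim.

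The third step handles intra-class EPs. Within each block, the EP of $\mathbf{D}_S$ is a simple branch point of the eigenvalue curve at $\rho_f=0$. The corresponding Jordan-chain system is regular there, with nonvanishing Jacobian, which is the generic non-degeneracy I will assume. The implicit function theorem in $\rho_f$ then yields a continuous trajectory $\omega_{\rm EP}(\rho_f)$, and \cref{prop:EPpair} shows that the conjugate partner moves with it. For the ``no \emph{de novo}'' statement, I would argue that new EPs of a block can only appear by bifurcation, that is, by pair-creation at a higher-order degeneracy. Along a homotopy $s\in[0,1]$, the number of EPs within the window equals the discriminant-zero count. By the argument principle applied to the discriminant of $\det\mathbf{D}_S$ on a contour kept away from $\mathcal{S}$, this count is locally constant unless zeros enter through the contour or collide.

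I expect this last step to be the main obstacle. Ruling out creation rigorously needs either a genericity hypothesis (no higher-order EP is met along the density path) or control of zeros escaping through the sound line $\mathcal{S}$, where $\mathbf{D}$ is singular. I would therefore state the result as holding for generic loading paths. The change of $|\Im\omega_{\rm EP}|$ then follows from \cref{eq:gap_EP_distance}, which gives the modified veering sharpness.
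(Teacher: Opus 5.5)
Your first two steps are correct and go further than the paper. The paper's only argument is the first-order projection \eqref{eq:Vf_double} of the fluid term onto the degenerate S--A pair at the vacuum crossing. There the off-diagonal entries cancel because $u_z$ has opposite parity at the two faces; the intra-class assertions are not argued at all. You observe that $\mathbf{R}e_{\rm top}=-e_{\rm bot}$, so the fluid term, which carries one common scalar $Z(\omega,k_z)$, commutes with $\mathbf{R}$. This gives the exact factorization $\det\mathbf{D}=\det\mathbf{D}_S\det\mathbf{D}_A$ for all complex $(\omega,k,k_z)$ and all $\rho_f$. S--A coincidences are then nodes of a reducible curve and never branch points, to all orders and not only near the vacuum crossing. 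Local continuation of a nondegenerate vacuum EP by the implicit function theorem is also fine, since $Z$ is analytic wherever $k_z\neq0$.

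The gap is the ``no \emph{de novo}'' clause, and restricting to generic loading paths does not close it. Zeros of a block discriminant that depends holomorphically on $\omega$ and $\rho_f$ are conserved with multiplicity inside any contour on which it does not vanish. Collisions are therefore not the danger. The danger is the zeros that sit on $\mathcal{S}$ at $\rho_f=0$ without being EPs: these become EPs along every path as soon as $\rho_f\neq0$.

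Concretely, inside a parity block the fluid term is rank one, $\tfrac12 Z\,\mathbf{h}\mathbf{h}^{T}$ with $\mathbf{h}=e_{\rm top}\mp e_{\rm bot}$ ($-$ for S, $+$ for A). By the matrix determinant lemma the block equation is $k_zP=i\rho_f\omega^2Q$, with $P=\det\mathbf{D}_{0,S}$ and $Q=\tfrac12\mathbf{h}^{T}\mathrm{adj}(\mathbf{D}_{0,S})\mathbf{h}$ (similarly for A). Take a point $(\omega_0,k_0)\in\mathcal{S}$ where a vacuum branch $P=0$ of that parity meets the sound line transversally with $Q_0\neq0$, and use $(P,k_z^2)$ as local coordinates. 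The curve is $Pk_z=\epsilon$ with $\epsilon=i\rho_f\omega_0^2Q_0$, and $\omega-\omega_0\approx a\epsilon/k_z+bk_z^2$ with generically $ab\neq0$. This has three simple critical points, $k_z^3=a\epsilon/(2b)$.

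These are three intra-class EPs with $|k_z|\sim\rho_f^{1/3}$, lying within $O(\rho_f^{2/3})$ of $(\omega_0,k_0)$. Their monodromy permutes three local sheets: the vacuum branch on both $k_z$-sheets and the new near-sound-line root $k_z\approx i\rho_f\omega^2Q/P$. That new root is the Scholte-type branch the paper itself treats as having no vacuum ancestor. The trapped/leaky A$_0$ interaction in \cref{fig:sym_double}(b,c) sits at exactly such a crossing, where vacuum A$_0$ has no avoided crossing for a vacuum EP to control.

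So the clause fails near $\mathcal{S}$. What your argument can prove is a restricted statement: in a compact region of the $\omega$-plane whose closure avoids the frequencies at which vacuum branches of the block meet $\mathcal{S}$, the EP count is conserved along $s\in[0,1]$ until an EP crosses the region's boundary. The paper supplies no argument for this clause, so you are not missing an idea from it, but your proposed hedge aims at the wrong obstruction.
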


\subsection{Conditional EP formation under single-sided loading}
\label{sec:single_sided_analysis}

When fluid loads only one surface, the mirror symmetry is broken. The question is whether a symmetry-protected S--A diabolic point necessarily unfolds into an exceptional point.

Consider a vacuum crossing $(k_0,\omega_0)$ with degenerate eigenvectors $\psi_S$, $\psi_A$. Expanding the fluid-loaded operator $\mathbf{D} = \mathbf{D}_0 + \mathbf{H}^\mathsf{T}\mathbf{Z}\mathbf{H}$ to first order and projecting onto the subspace spanned by $\{\psi_S,\psi_A\}$ yields the local $2\times 2$ matrix function
\begin{equation}
    \mathbf{F}_{\!SA}(k,\omega) = (\omega-\omega_0)\mathbf{A} + (k-k_0)\mathbf{B} + \mathbf{V}_f(\omega,k),
    \label{eq:local_2x2}
\end{equation}
where $\mathbf{A}$ and $\mathbf{B}$ are the projections of the vacuum first-derivative operators, and $\mathbf{V}_f$ is the fluid perturbation. For single-sided loading,
\begin{equation}
    \mathbf{V}_f = Z_f(\omega,k)
    \begin{pmatrix}
        u_S^2 & u_S u_A \\[2pt]
        u_S u_A & u_A^2
    \end{pmatrix},
    \label{eq:Vf_single}
\end{equation}
with $u_S$, $u_A$ the normal displacements of the vacuum modes evaluated at the loaded surface. Two effects are present: the diagonal terms shift the individual dispersion surfaces, while the off-diagonal term $V_{SA}=Z_f u_S u_A$ breaks the symmetry selection rule and couples the S and A families.

The fate of the DP under this coupling depends critically on the spectral sector---subsonic versus supersonic---because the analytic character of the fluid impedance $Z_f$ differs fundamentally between the two.

\paragraph{Subsonic (trapped) regime.}
For $v_{\rm p}<c_f$, the vertical wavenumber is purely imaginary, $k_z=i\alpha$ with $\alpha>0$, and the impedance $Z_f=-\rho_f\omega^2/\alpha$ is real. The projected operator $\mathbf{F}_{\!SA}$ is therefore Hermitian for real $(\omega,k)$, and its eigenvalues lie on the real axis. The symmetry-protected DP is a real-axis intersection of two real eigenvalues. When the off-diagonal coupling $V_{SA}$ is switched on, the two real sheets are pushed apart on the real axis---an avoided crossing. Because the eigenvalues are analytic functions of the complex variables $(\omega,k)$, an avoided crossing on the real axis generically implies the existence of a pair of exceptional points in the complex plane: the two Riemann sheets, repelled on the real axis, must reconnect somewhere in the complex parameter space. Hence, in the trapped regime, EP formation is \emph{generic}: any infinitesimal symmetry-breaking perturbation lifts the degeneracy and creates a nearby conjugate EP pair. These EPs are \emph{hidden}---they lie off the real frequency axis---but they are physically present as the organizing centers of the veering.

\paragraph{Supersonic (leaky) regime.}
For $v_{\rm p}>c_f$, $k_z$ is complex with $\Im(k_z)>0$, and $Z_f$ is complex-valued. The projected operator is genuinely non-Hermitian: its eigenvalues are complex even for real $\omega$, and no real-axis intersection exists to anchor the analytic continuation. Symmetry breaking removes the protection of the vacuum crossing and enables S--A coupling, but the eigenvalues now trace independent complex trajectories in $(\Re k, \Im k)$ space. An EP requires the coincidence of two complex numbers---i.e.\ the simultaneous equality of both $\Re k$ and $\Im k$ for the two branches. Whether this occurs depends on the magnitude of the fluid-induced complex wavenumber splitting relative to the coupling: the trajectories must be brought to a common point in the complex plane. This is not guaranteed by symmetry breaking alone. EP formation in the leaky regime is therefore \emph{conditional}.

\begin{proposition}[Conditional EP formation under single-sided loading]
\label{prop:conditional_EP}
For a mid-plane symmetric plate under single-sided fluid loading, the unfolding of a symmetry-protected S--A diabolic point into an exceptional point is:
\begin{enumerate}[label=(\roman*)]
    \item \emph{generic} in the subsonic (trapped) regime, where the real impedance makes the perturbed operator Hermitian and any real-axis avoided crossing guarantees a nearby EP pair;
    \item \emph{conditional} in the supersonic (leaky) regime, where the complex impedance renders the operator non-Hermitian and EP formation requires the additional coincidence that the radiation-induced complex wavenumber splitting be sufficiently small for the eigenvalue trajectories to intersect.
\end{enumerate}
\end{proposition}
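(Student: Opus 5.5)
The argument will be local: it rests entirely on the projected $2\times2$ matrix function $\mathbf{F}_{\!SA}$ of \cref{eq:local_2x2}, with $\mathbf{V}_f$ given by \cref{eq:Vf_single}. Under the usual normalisation of the vacuum modes, $\mathbf{A}$ and $\mathbf{B}$ are real and diagonal, since the vacuum S and A families decouple at the diabolic point. I write $\mathbf{A}=\mathrm{diag}(a_S,a_A)$ and $\mathbf{B}=\mathrm{diag}(b_S,b_A)$. Fix $\omega$ and solve $\det\mathbf{F}_{\!SA}=0$ for $k$. Treating $Z_f$ as locally constant, $Z_f\approx Z_0=Z_f(\omega_0,k_0)$, the two roots are
\begin{equation*}
k_\pm-k_0=\tfrac12(\delta_S+\delta_A)\pm\tfrac12\sqrt{(\delta_S-\delta_A)^2+4\beta^2}.
\end{equation*}
Here $\delta_j=-[(\omega-\omega_0)a_j+Z_0u_j^2]/b_j$ and $\beta^2=Z_0^2u_S^2u_A^2/(b_Sb_A)$. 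An EP is a zero of the discriminant
\begin{equation*}
\Delta(\omega)=(\delta_S-\delta_A)^2+4\beta^2
\end{equation*}
at which the eigenvector is defective. This holds automatically when $\beta\neq0$, because then the $2\times2$ matrix is not diagonalisable at the double root. The proof therefore reduces to locating the zeros of $\Delta$. I will take $b_Sb_A>0$, i.e.\ co-directed group velocities, as in the veering configuration; the opposite sign can be handled the same way.

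\emph{Case (i), subsonic regime.} Here $Z_0$ is real, so all coefficients are real and $\beta^2>0$ whenever $u_Su_A\neq0$. Since $\delta_S-\delta_A$ is affine in $\omega$ with real slope $s=a_A/b_A-a_S/b_S\neq0$ (distinct vacuum group velocities at the crossing), we have $\Delta(\omega)=s^2(\omega-\omega_1)^2+4\beta^2$ with $\omega_1$ real. Its zeros are
\begin{equation*}
\omega_{\mathrm{EP}}=\omega_1\pm 2i|\beta|/|s|,
\end{equation*}
a complex-conjugate pair. This is consistent with \cref{prop:EPpair} and with the gap law \cref{eq:gap_EP_distance}. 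The pair exists for every nonzero coupling, with no condition beyond the genericity assumptions $u_Su_A\neq0$ and $s\neq0$. Because the coefficients are real, the real-axis roots are real and split, reproducing the veering of \cref{prop:signature}(i). To make this rigorous beyond the constant-$Z_f$ truncation, I will invoke the implicit function theorem on $\Delta(\omega,k)=0$ jointly with $\det\mathbf{F}_{\!SA}=0$. The simple zeros persist under the small analytic corrections coming from the $(\omega,k)$ dependence of $Z_f$, which is analytic away from $\mathcal{S}$ because $\alpha>0$.

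\emph{Case (ii), supersonic regime.} The same computation goes through, but now $Z_0$, $\delta_j$ and $\beta^2$ are complex. $\Delta$ is still a quadratic in $\omega$ and formally has two complex roots. The content of ``conditional'' is that these roots need not be admissible, and need not be linked to the real-frequency branches in the manner of the DP unfolding. The roots are
\begin{equation*}
\omega=\omega_1'\pm 2i\beta/s,
\end{equation*}
where $\omega_1'$ is now complex because of the diagonal radiation shifts $Z_0u_j^2/b_j$. The condition for a conjugate-pair EP that controls the observed interaction is then that the resulting point lies in $\mathcal{M}_{\rm phys}$, meaning $\Im k>0$ and $\Im k_z>0$, and sits within the local neighbourhood where the expansion is valid. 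That neighbourhood is of size $O(|Z_0|)$ relative to $|\beta|/|s|$. This yields an explicit inequality comparing the radiation-induced splitting $|\Im(\delta_S-\delta_A)|$ with the coupling $|\beta|$. I expect the main obstacle to be exactly this step, namely turning ``conditional'' into a concrete criterion. Algebraically a zero of $\Delta$ always exists, so the honest statement is that admissibility and locality of that zero are not guaranteed. I would establish this by exhibiting parameter regimes in which the shift $\omega_1'$ pushes the EP across $\mathcal{B}_k$ or $\mathcal{B}_{k_z}$ (cf.\ \cref{sec:EP_trajectories}), which shows that symmetry breaking alone does not ensure a physical-sector EP.
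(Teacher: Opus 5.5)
\paragraph{Part (i) is sound; the gap is in part (ii).}
Your part (i) is sound and more explicit than the paper, which only argues that a real-axis avoided crossing of analytic eigenvalues must reconnect through a complex EP pair. One caveat: $b_Sb_A<0$ is not handled ``the same way''. There $\beta^2<0$, and the EPs fall on the real axis, bounding a complex-$k$ gap rather than organizing a veering.

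In (ii), your own computation shows that $\Delta$ always has two zeros, so ``conditional'' cannot refer to the existence of complex-$\omega$ EPs. The replacement you propose is admissibility plus locality, via an inequality between $|\Im(\delta_S-\delta_A)|$ and $|\beta|$ that you never derive. It does not separate the two regimes. Put $\nu=\omega-\omega_0$, $\kappa=k-k_0$, $d=u_S^2/b_S-u_A^2/b_A$ and $g=u_Su_A/\sqrt{b_Sb_A}$. Then $\delta_S-\delta_A=s\nu-Z_0d$, $\beta^2=Z_0^2g^2$, and the EPs sit at $\nu_\pm=Z_0(d\pm 2ig)/s$. Three consequences follow:
\begin{itemize}
\item The EPs are $O(|Z_0|)$ from the DP in \emph{both} regimes, so locality is automatic.
\item The ratio $|\Im(\delta_S-\delta_A)|/|\beta|=|d|/|g|$ is independent of $\rho_f$, so no loading threshold can come out.
\item In the trapped regime the two EPs are $(\nu,\kappa)$ and $(\nu^*,\kappa^*)$, with $\Im\kappa_{\rm EP}=-\tfrac12\Im\nu_\pm\,(a_S/b_S+a_A/b_A)\neq0$. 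This is nonzero because $a_S/b_S+a_A/b_A=-(c_{g,S}^{-1}+c_{g,A}^{-1})$ and the group velocities are co-directed. So one member of the trapped pair already violates $\Im k\ge0$, and your criterion would make (i) conditional too.
\end{itemize}
Note also that this pairing comes from the real coefficients of the trapped problem. It is not the forward/backward pairing of \cref{prop:EPpair}, whose partner has $\Re k<0$.

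\paragraph{The missing observation.}
$Z_0$ is not a generic complex number. At a supersonic vacuum DP, $(\omega_0,k_0)$ is real with $k_{z0}>0$, so $Z_0=-i\rho_f\omega_0^2/k_{z0}=-i|Z_0|$ is purely imaginary. Then the term $2i\beta/s$ in your formula $\omega_1'\pm2i\beta/s$ is real, i.e.\ $\nu_\pm=|Z_0|(\pm2g-i\,d)/s$. Both EPs therefore lie on the \emph{same} side of the real axis, at distance $|Z_0d/s|$, instead of straddling it as in (i).

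Along real $\omega$, $\Delta=(s\nu+i|Z_0|d)^2-4|Z_0|^2g^2$ is real only at $\nu=0$, where it is negative. Hence $\Re k$ cross exactly once while the ordering of $\Im k$ never changes; this is the signature seen in \cref{fig:sym_single}(c). The two trajectories meet at a real frequency if and only if $d=0$, i.e.\ iff the radiation-induced attenuation splitting $\Im(\delta_S-\delta_A)=|Z_0|d$ vanishes.

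This is the precise content of the paper's qualitative argument, which reads leaky-regime EP formation as simultaneous equality of $\Re k$ and $\Im k$ along the real frequency axis. It is the argument you need for (ii): the trapped pair straddles the axis for any nonzero coupling, whereas in the leaky regime the pair reaches the real axis only under the additional tuning $d=0$.
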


This asymmetry between the two regimes is a fundamental consequence of the analytic structure of the fluid impedance $Z_f\propto 1/k_z$, which is real for imaginary $k_z$ (subsonic) and complex for complex $k_z$ (supersonic). It implies that the observable real-frequency behavior---whether two interacting branches veer or cross---cannot be predicted from symmetry arguments alone; the sector of the Riemann manifold on which the interaction occurs is decisive.

\subsection{Summary of mode interaction mechanisms}
\label{sec:summary_table}

\Cref{tab:EP_summary} summarizes the contrasting topological outcomes for a mid-plane symmetric plate under the three loading conditions. The central message is that symmetry breaking is a necessary but not sufficient condition for EP creation: the subsonic/supersonic character of the interacting modes determines whether new EPs are born, while the migration of existing EPs (\cref{sec:migration}) determines whether inherited veerings survive.

\begin{table}[t]
  \centering
  \caption{Summary of mode interaction mechanisms for a mid-plane symmetric plate under different loading conditions.}
  \label{tab:EP_summary}
  \begin{tabular}{@{}p{2.5cm}p{4.0cm}p{4cm}p{4cm}@{}}
    \toprule
    & \textbf{Vacuum (Hermitian)} & \textbf{Double-sided loading} & \textbf{Single-sided loading} \\
    \midrule
    S--A coupling 
    & Decoupled (symmetry) 
    & Decoupled (symmetry preserved) 
    & Coupled (symmetry broken) \\
    \addlinespace
    S--A diabolic points 
    & Real crossings (protected) 
    & Protected (no EP) 
    & Unfolded; EP formation generic in trapped regime, conditional in leaky regime \\
    \addlinespace
    Intra-class EPs 
    & Govern avoided crossings on real axis
    & Shifted by radiation loss; govern trapped- and leaky-mode veering 
    & Shifted by radiation loss; additionally influenced by S--A coupling \\
    \bottomrule
  \end{tabular}
\end{table}

\section{Computational inter-manifold transport framework}
\label{sec:computation}

The theoretical mechanisms of \cref{sec:observation_space,sec:migration,sec:single_sided} are implemented through a vacuum-anchored inter-manifold transport strategy. The complete Hermitian vacuum spectrum provides an unambiguous classification of every mode; density homotopy transports selected vacuum seeds onto the fluid-loaded manifold; and adaptive real-$\omega$ continuation traces the physical branches. This section outlines the procedure and its theoretical guarantees, with algorithmic details deferred to \cref{app:homotopy}.

\subsection{Seed selection from the vacuum spectrum}
\label{sec:seeds_computation}

Vacuum modes are classified by their phase velocity relative to the fluid sound speed, as described in \cref{sec:downward_continuation}. For each mode, one or two seeds are placed at frequencies where the vacuum eigenpair is well separated from neighboring branches and far from the sound line. Straddling modes require two seeds: one near the cut-off to capture the downward extension towards the sound line, and one at the upper end of the observation window to capture the supersonic segment. Each seed is initialized on the outgoing sheet by enforcing $\Im(k_z)>0$.

\subsection{Density homotopy}
\label{sec:homotopy_computation}
The inter-manifold transport strategy employed here follows the homotopy-continuation framework established for general viscoelastic waveguides \cite{xiao_homotopy_2026}, adapted here to the fluid-loaded problem with the fluid density as the homotopy parameter. A one-parameter family is introduced by scaling the fluid density,
\begin{equation}
    \rho_f(s)=s\,\rho_{\text{target}},\qquad s\in[0,1].
\end{equation}
The augmented system appends the fluid constraint $k_z^2+k^2=\omega^2/c_f^2$ and treats $k_z$ as an independent unknown, removing the square-root ambiguity during transport. Following a complex path in $s$ avoids the isolated exceptional points, so each seed at $s=1$ inherits the exact physical identity of its vacuum ancestor. The predictor–corrector implementation is given in \cref{app:homotopy}.

\subsection{Real-frequency continuation and topological validation}
\label{sec:continuation_computation}

From each transported seed, the branch is traced along the real frequency axis using $\omega$ as the continuation parameter. For generic physical-sector EP pairs with $\Im\omega_{\mathrm{EP}}\neq0$, the real axis contains no exceptional point, so the continuation never encounters a true degeneracy; it traverses avoided crossings where eigenvectors rotate continuously. The step size is controlled by the Modal Assurance Criterion: a rapid drop in MAC signals proximity to an underlying EP and triggers refinement. Every veering is checked against \cref{prop:signature}; a true crossing with uncorrelated imaginary parts in a branch inherited from a vacuum avoided crossing signals that the controlling EP has left the physical sector, as established in \cref{sec:migration}.

\subsection{Emergent modes and the limits of the method}
\label{sec:emergent_computation}

Branches without a vacuum ancestor, such as the symmetric Scholte wave under certain double-sided loadings, cannot be reached by density homotopy. They are captured by a direct low-frequency search on the fully loaded system, using the Stoneley speed as initial guess. The full physical spectrum within the seeded frequency band is the union of vacuum-ancestor branches and these emergent modes. No finite frequency window can exhaust the infinite-dimensional open-waveguide spectrum; the framework guarantees topological consistency and completeness relative to the chosen vacuum anchors, rather than absolute exhaustiveness.

\section{Validation of the restructuring mechanisms in fluid-loaded laminates}
\label{sec:validation}

The theoretical framework of \cref{sec:observation_space,sec:migration,sec:single_sided} establishes two independent restructuring mechanisms---the expansion of the physical observation space and the migration of exceptional points across admissibility boundaries---as well as the conditional nature of EP formation under mirror-symmetry breaking. This section validates these mechanisms through three sets of numerical experiments on composite laminates. The symmetric laminate under double-sided loading tests symmetry preservation, the emergence of Scholte waves, and the observation-space expansion mechanism. The symmetric laminate under single-sided loading tests the conditional EP formation of \cref{sec:single_sided}. The asymmetric laminate, in which no symmetry-protected crossings exist, is used to isolate and validate the EP-migration mechanism of \cref{sec:migration}, and to further confirm the observation-space expansion.

\subsection{Model configuration}
\label{sec:setup}

The solid is a unidirectional carbon-fiber/epoxy composite (density $\rho_s=1560\,\mathrm{kg/m^3}$, stiffness constants from \cite{castaings_finite_2008}), assembled into two 16-ply laminates of total thickness $h=4\,\mathrm{mm}$:
\begin{itemize}
    \item \textbf{Symmetric laminate}: $[0/90/45/\text{-}45]_{2\mathrm{s}}$, possessing mid-plane reflection symmetry.
    \item \textbf{Asymmetric laminate}: $[0/90/45/\text{-}45]_{4}$, lacking geometric symmetry.
\end{itemize}
The fluid is water ($\rho_f=1000\,\mathrm{kg/m^3}$, $c_f=1500\,\mathrm{m/s}$). The SAFE discretization employs fifth-order GLL spectral elements (two per physical ply, $\sim$480 displacement unknowns). The frequency–thickness product range $0\le fd \le 5\,\mathrm{MHz\cdot mm}$ is chosen as the observation window where physical dispersion modes and modal interactions are identified.

The inter-manifold transport strategy of \cref{sec:computation} is applied: vacuum anchors are classified by phase velocity relative to $c_f$, transported by density homotopy, and traced by real-$\omega$ continuation. Every avoided crossing is validated against the signature of \cref{prop:signature}. For reference, selected curves are compared with the open-source solver Dispersion Calculator (DC) v3.1 \cite{huber_dispersion_2024}, which employs fixed-frequency root searching without topological guidance.

\subsection{Symmetric laminate under double-sided loading: symmetry preservation, observation-space expansion, and Scholte-wave emergence}
\label{sec:val_double}

\Cref{fig:sym_double}(a) displays the vacuum spectrum of the symmetric laminate. Mid-plane reflection symmetry decouples the spectrum into symmetric (S, red) and antisymmetric (A, blue) families, producing numerous symmetry-protected diabolic crossings. The sound line $\omega=c_f k$ partitions each branch into subsonic ($v_{\rm p}<c_f$) and supersonic ($v_{\rm p}>c_f$) segments. Only the fundamental A$_0$ mode straddles this line; all higher-order modes are entirely supersonic.

Under double-sided water loading, \cref{prop:double_sided} predicts that the S and A families remain decoupled, and that their crossings survive as genuine diabolic points. \Cref{fig:sym_double}(d,e) confirms this precisely: S and A branches intersect in the real part with uncorrelated imaginary parts, showing no veering. Within each symmetry family, adjacent modes interact via conjugate EP pairs and exhibit the signature dictated by \cref{prop:signature}. In the supersonic regime, the avoided crossings display real-part repulsion accompanied by a clear exchange of imaginary parts (\cref{fig:sym_double}(d,e), black box); in the subsonic regime, the trapped A$_0$ branch veers away from the leaky A$_0$ branch with $\Im(k)=0$ throughout (\cref{fig:sym_double}(b,c)).

The spectrum also contains modes that have no vacuum counterpart. The symmetric Scholte wave (S$_0$-type), a purely bound interface mode with phase velocity $v_{\rm p}\lesssim c_f$, is retrieved by the direct low-frequency search; it cannot be accessed by density homotopy because no real-$k$ vacuum ancestor exists. More strikingly, three higher-order leaky branches extend downward to the sound line, terminating at $v_{\rm p}=c_f$ rather than the infinite phase velocity expected from the vacuum cut-off (\cref{fig:sym_double}(d)). This is the observational signature of the \emph{observation-space expansion} established in \cref{sec:observation_space}: the fluid loading enlarges the admissible frequency interval of these branches downward from their vacuum cut-off frequencies to the sound line. These branches are precisely the ones predicted by \cref{prop:extension} and absent from the vacuum observation space.

These sound-line-terminating branches are entirely absent from the DC output (\cref{fig:sym_double}(g-i)), which initializes higher-order modes assuming infinite phase velocity at cut-off and lacks the topological guidance to discover their finite-velocity extensions to the sound line. In total, DC misses five physical branches: the three sound-line-terminating branches discussed above and two additional higher-order branches that terminate at finite cut-off frequencies—though it correctly captures the symmetric Scholte wave. The present method recovers all of them, confirming that the recovered branches are not numerical artifacts but the observable consequence of the loading-induced expansion of the physical observation space.

\begin{figure}[!htbp]
\centering
\includegraphics[width=1.03\columnwidth]{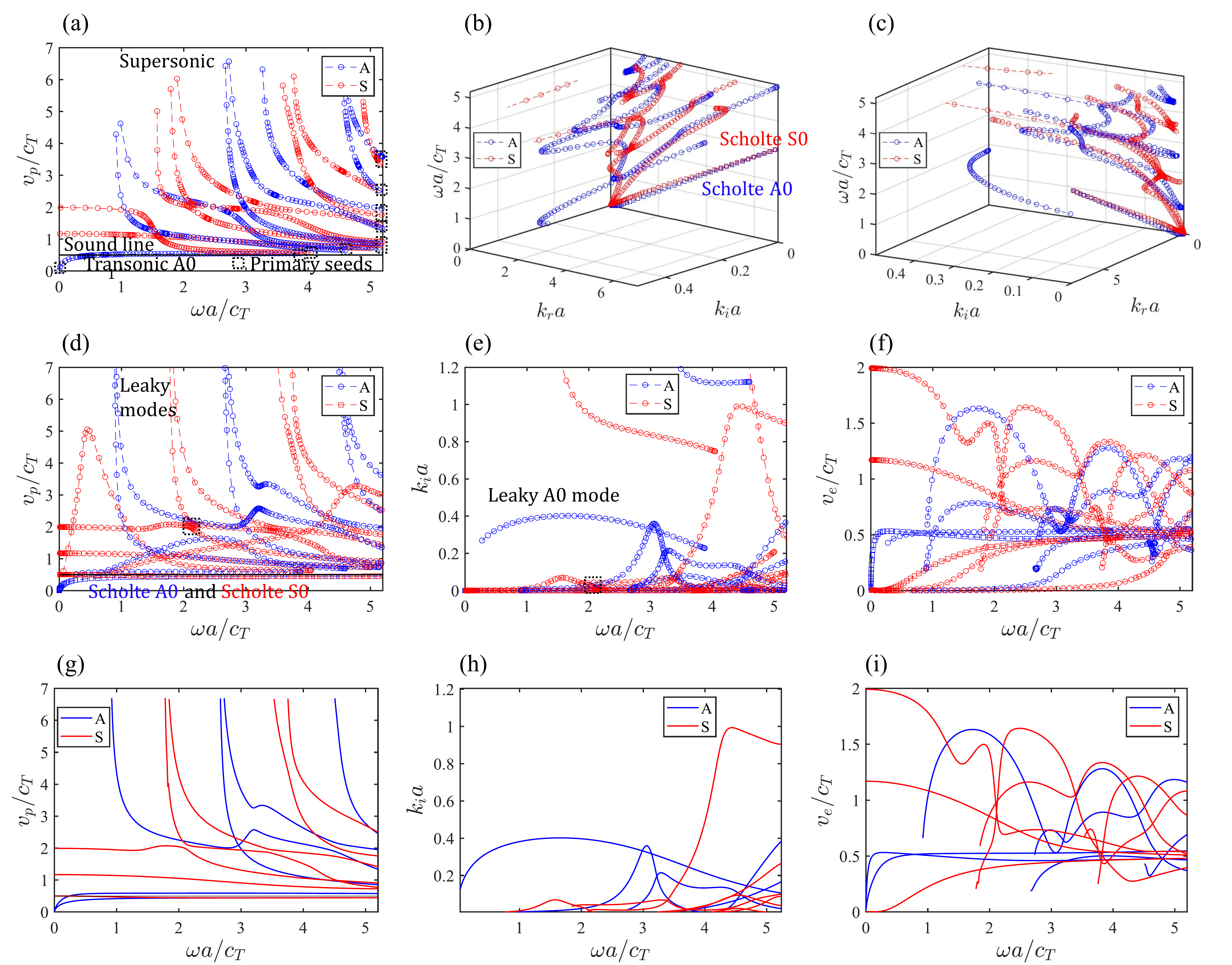}
\caption{
Complete dispersion analysis for the symmetric laminate $[0/90/45/\text{-}45]_{2\mathrm{s}}$ under double-sided water loading.
\textbf{(a)} Vacuum spectrum ($s=0$): symmetric (S, red) and antisymmetric (A, blue) families; the solid black line is the sound line $\omega = k c_f$; black frames mark primary vacuum anchors.
\textbf{(b),(c)} Three-dimensional complex-wavenumber trajectories $(\Re(k),\Im(k),\omega)$ of the fluid-loaded modes, viewed from two angles to expose the evolution along the real frequency axis.
\textbf{(d)} Phase velocity at $s=1$ (present method); arrows mark three higher-order branches extending to the sound line.
\textbf{(e)} Attenuation coefficient $\Im(k)$ at $s=1$; black box highlights a supersonic intra-family avoided crossing with imaginary-part exchange.
\textbf{(f)} Energy flux velocity at $s=1$.
\textbf{(g)} Phase velocity from DC \cite{huber_dispersion_2024}.
\textbf{(h)} Attenuation coefficient from DC; note the omission of the three sound-line-reaching branches.
\textbf{(i)} Energy flux velocity from DC.
}
\label{fig:sym_double}
\end{figure}

\subsection{Symmetric laminate under single-sided loading: conditional exceptional-point formation}
\label{sec:val_single}

When water loads only one surface, the mid-plane reflection symmetry is broken. \Cref{prop:conditional_EP} predicts a fundamental asymmetry between the supersonic and subsonic regimes. We test both limits.

\paragraph{Supersonic (leaky) regime.} With physical water ($c_f=1500\,\mathrm{m/s}$), all higher-order modes are supersonic. \Cref{fig:sym_single}(b--d) reveals the fate of the two symmetry-protected S--A diabolic points marked by red squares: at normalized frequency $\approx 2.89$ and $3.02$ the two modes intersect in $\Re(k)$, yet their imaginary parts differ by more than a factor of 10 (\cref{fig:sym_single}c). The eigenvalues therefore remain far apart in the complex $k$-plane; the simultaneous equality of real and imaginary parts required for an exceptional point is not achieved. This confirms the \emph{conditional} nature of EP formation in the leaky regime: symmetry breaking removes the protection of the vacuum crossing, but the radiation-loss-induced complex wavenumber splitting pushes the two trajectories apart in the imaginary direction, preventing coalescence.

The absence of an exceptional point is further evidenced by the energy-flux response. At each crossing, one mode exhibits a pronounced dip in $v_e$ while the other remains essentially smooth (\cref{fig:sym_single}d). In an EP-mediated avoided crossing, the two modes would exchange physical identities symmetrically---both energy flux and attenuation would undergo a mutual, continuous transfer. The observed asymmetric response indicates that the modes remain independent: the dipping mode undergoes a rapid reconstruction of its radiation mechanism at the crossing frequency, whereas the other mode evolves unaffected. The triad of real-part crossing, uncorrelated imaginary parts, and asymmetric energy-flux response constitutes a definitive diagnostic that no exceptional point governs these interactions.

\begin{figure}[!htbp]
\centering
\includegraphics[width=1.03\columnwidth]{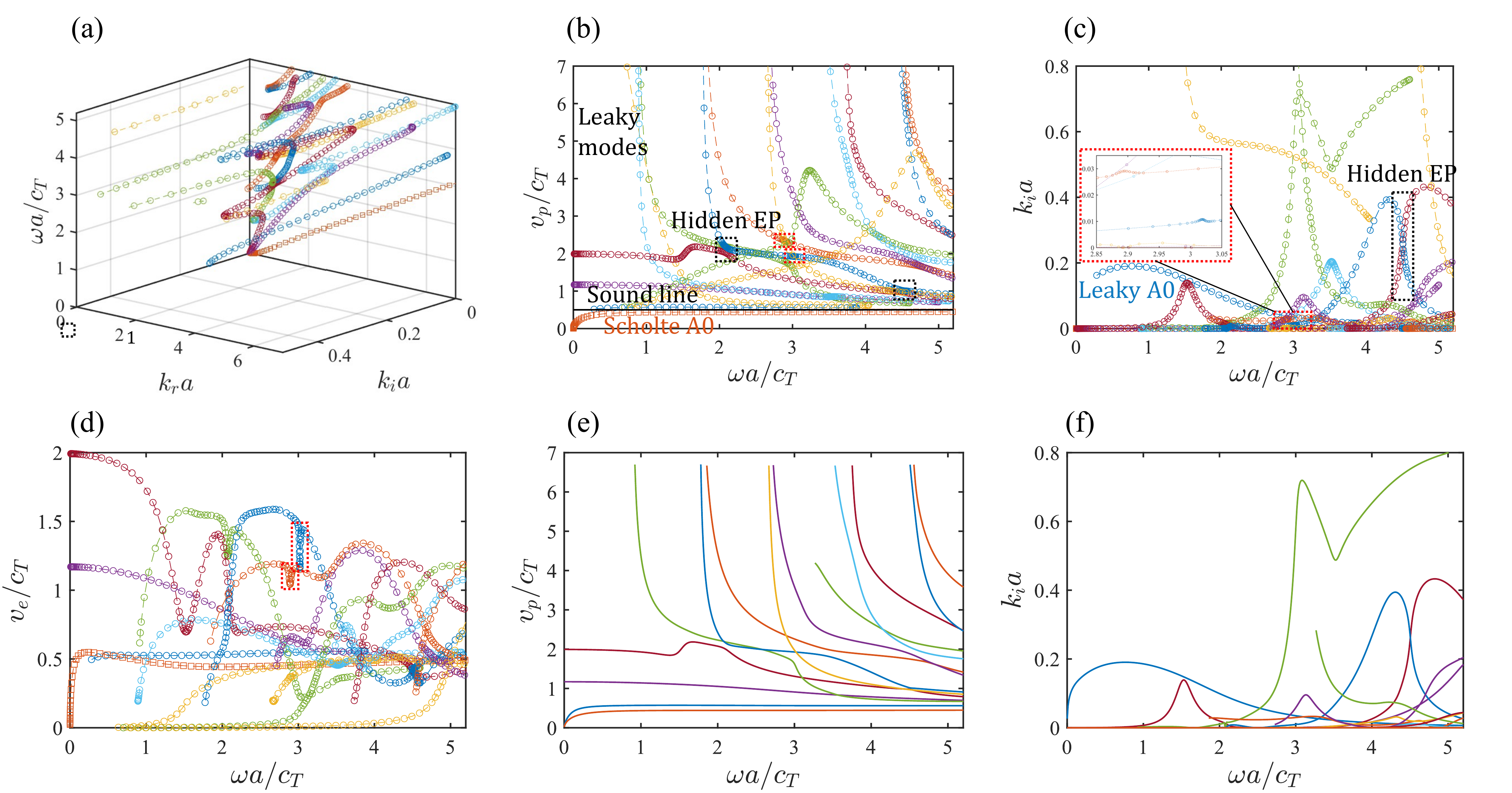}
\caption{
Single-sided water loading of the symmetric laminate $[0/90/45/\text{-}45]_{2\mathrm{s}}$ ($c_f=1500\,\mathrm{m/s}$).
\textbf{(a)} Three-dimensional view of complex-wavenumber trajectories $(\Re(k),\Im(k),\omega)$.
\textbf{(b)} Phase velocity; boxes highlight intra-family avoided crossings governed by EP pairs.
\textbf{(c)} Attenuation coefficient $\Im(k)$; arrow C marks an S--A crossing that persists as a genuine crossing (no EP).
\textbf{(d)} Energy flux velocity.
\textbf{(e)} Phase velocity from DC.
\textbf{(f)} Attenuation coefficient from DC; note the missing branches.
}
\label{fig:sym_single}
\end{figure}

\paragraph{Subsonic (trapped) regime.} To isolate the trapped limit, we introduce a fictitious ``heavy water'' ($\rho_f=1000\,\mathrm{kg/m^3}$, $c_f=25000\,\mathrm{m/s}$) whose sound speed exceeds all vacuum phase velocities. Every mode is then subsonic, the fluid impedance is real, and the system is Hermitian. \Cref{fig:sym_single_alltrapped}(a) shows the phase velocity obtained with a moderate tracking tolerance (MAC $=0.99$). Most S--A crossings have unfolded into avoided crossings, but one intersection (black box) still resembles a diabolic point. When the tolerance is tightened to MAC $=0.999$ (\cref{fig:sym_single_alltrapped}(b)), the adaptive continuation refines the step size sufficiently to resolve the narrow veering. The energy velocity diagram (\cref{fig:sym_single_alltrapped}(c)) confirms the modal exchange: the two branches swap their energy velocities at the veering, a signature of an underlying EP pair.

\begin{figure}[!htbp]
\centering
\includegraphics[width=1.03\columnwidth]{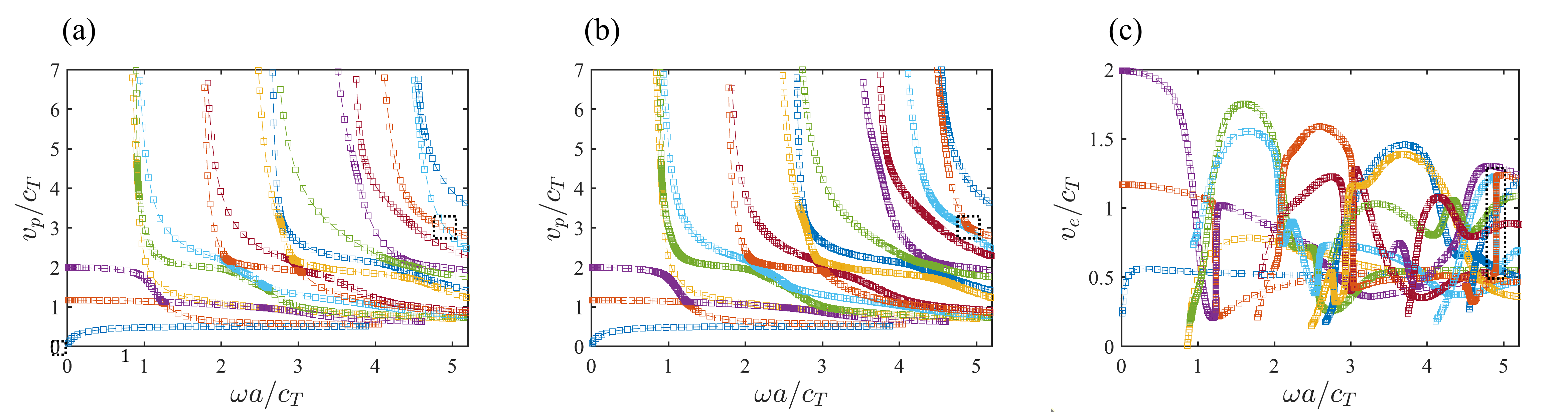}
\caption{
Single-sided ``heavy water'' loading ($c_f=25000\,\mathrm{m/s}$, $\rho_f=1000\,\mathrm{kg/m^3}$). All modes are trapped ($\Im(k)=0$).
\textbf{(a)} Phase velocity with MAC tolerance $0.99$; box D highlights an unresolved S--A crossing.
\textbf{(b)} Phase velocity with MAC tolerance $0.999$; the avoided crossing at D is now resolved.
\textbf{(c)} Energy velocity with MAC tolerance $0.999$; note the exchange of energy velocity at each avoided crossing, signalling EP-mediated modal coupling.
}
\label{fig:sym_single_alltrapped}
\end{figure}

The narrowness of the veering window indicates that the EPs lie extremely close to the real frequency axis in the complex plane---a direct consequence of the weak fluid-induced coupling in the subsonic regime. This validates the \emph{generic} nature of EP formation in the trapped regime: any infinitesimal symmetry-breaking perturbation immediately lifts the degeneracy and creates a nearby EP pair, precisely as the local two-mode model of \cref{sec:single_sided} predicts. Together, the two experiments confirm \cref{prop:conditional_EP}: the same symmetry-breaking perturbation produces qualitatively different spectral topologies depending on the spectral sector.

\subsection{Asymmetric laminate: EP migration and veering-to-crossing transition}
\label{sec:val_asym}

The asymmetric laminate $[0/90/45/\text{-}45]_{4}$ lacks mid-plane symmetry; in vacuum, every close approach between two branches is already an avoided crossing organised by a conjugate EP pair. Under double-sided water loading, the non-Hermitian spectrum inherits this veering structure, but with a key difference: certain narrow-gap avoided crossings disappear and are replaced by true crossings with uncorrelated imaginary parts, while wider veerings persist. This is the signature of the \emph{EP-migration mechanism} established in \cref{sec:migration}.

\Cref{fig:unsym_double}(d,e) shows that among the numerous avoided crossings present in the vacuum spectrum, several narrow-gap veerings are erased after fluid loading, appearing as genuine crossings of $\Re(k)$ with separated $\Im(k)$. These correspond to EP pairs that, under fluid loading, migrated from the physical sheet to the non-physical sheet, crossing either the causality or the radiation admissibility boundary. The remaining wide-gap veerings continue to exhibit the full EP-mediated signature: real-part repulsion accompanied by imaginary-part exchange, as highlighted by black squares in \cref{fig:unsym_double}(d). This selective disappearance directly validates \cref{prop:order}: the narrowest veerings vanish first because their controlling EPs are initially closest to the admissibility boundaries.

Direct evidence for the migration mechanism is provided by an intermediate state at $s=0.1$. \Cref{fig:unsym_double}(g) shows the phase velocity at this reduced fluid density, where four narrow avoided crossings have already turned into real-part crossings, marked by black squares. This indicates that the corresponding EP pairs crossed the admissibility boundary during the early stage of the density homotopy. Panel (h) tracks the trajectories of these EP pairs from $s=0$ to $s=1$, projected onto the plane of real frequency and $\Im(k_z)$; the crossing of the boundary $\Im(k_z)=0$ is clearly visible, confirming that these EPs leave the outgoing sheet and enter the incoming sheet, thereby decoupling the physical branches.

The absence of symmetry-protected crossings in the asymmetric laminate simplifies the topological interpretation: every observable mode interaction is either a persistent EP-mediated veering or a migration-induced crossing. The conjugate EP pairs are shifted in the complex plane by the radiation loss; pairs that remain within the physical sector preserve their veerings, while those that cross a boundary decouple the physical branches. This demonstrates that the inter-manifold transport framework captures both outcomes without relying on structural symmetry to disentangle mode identities.

As in the symmetric case, three higher-order branches extend to the sound line with finite phase velocity (\cref{fig:unsym_double}(d), arrows) and are absent from the DC phase-velocity curve shown in \cref{fig:unsym_double}(i), further confirming the observation-space expansion mechanism. The complex-wavenumber trajectories (\cref{fig:unsym_double}(b,c)) show these branches as isolated trajectories with significantly larger attenuation than neighbouring modes, crossing them in $\Re(k)$ but remaining far apart in $\Im(k)$---confirming that they are not coupled by EPs to the rest of the spectrum.

\begin{figure}[!htbp]
\centering
\includegraphics[width=1.03\columnwidth]{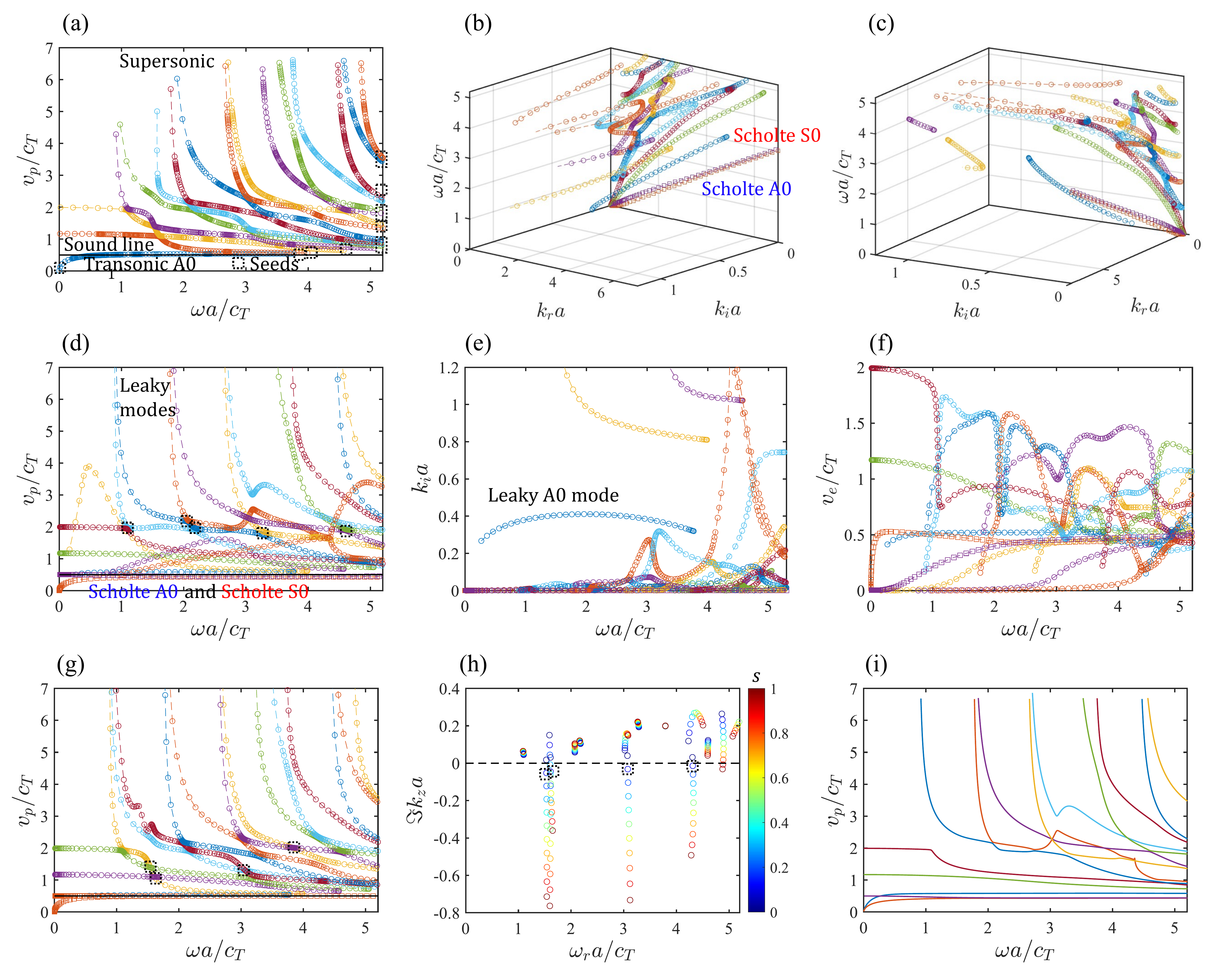}
\caption{
Asymmetric laminate $[0/90/45/\text{-}45]_{4}$ under double-sided water loading.
\textbf{(a)} Vacuum phase velocity ($s=0$); note the absence of symmetry-protected crossings.
\textbf{(b),(c)} Three-dimensional trajectories $(\Re(k),\Im(k),\omega)$ viewed from two angles.
\textbf{(d)} Phase velocity at $s=1$; arrows mark branches extending to the sound line; boxes E and F highlight persistent supersonic avoided crossings with imaginary-part exchange.
\textbf{(e)} Attenuation coefficient $\Im(k)$ at $s=1$.
\textbf{(f)} Energy flux velocity at $s=1$.
\textbf{(g)} Phase velocity at $s=0.1$; four real-part crossings are identified and denoted by black squares, indicating EP shifting from the physical sheet to the non-physical sheet.
\textbf{(h)} EP trajectories from $s=0$ to $s=1$ projected onto the real-frequency versus $\Im(k_z)$ plane; the four EP pairs corresponding to the crossings marked in (g) cross the $\Im(k_z)=0$ boundary, confirming their migration into the incoming sheet.
\textbf{(i)} Phase velocity ($s=1$) obtained with Dispersion Calculator (DC).
}
\label{fig:unsym_double}
\end{figure}

\subsection{Discussion: topological reconstruction of the observable spectrum}
\label{sec:discussion}

The validated mechanisms of \cref{sec:observation_space,sec:migration,sec:single_sided} together establish a coherent picture of fluid-loaded dispersion that goes beyond the individual predictions. In this section we draw out the physical implications that connect these mechanisms and distinguish them from existing views of non-Hermitian waveguides.

\paragraph{Two independent mechanisms, one observable spectrum.}
The physical observation space $\Omega_{\rm obs}$ is shaped by two mutually independent processes. The first---the expansion of the admissible frequency interval---changes \emph{which} solutions are visible: vacuum branches extend below their cut-offs, and emergent interface modes enter the observation window. The second---the migration of exceptional points across admissibility boundaries---changes \emph{how} the visible branches connect: an EP pair that leaves the physical sector decouples the two branches it previously controlled. These two mechanisms are not merely additive; they are jointly responsible for the systematic failure of conventional solvers. A root-finding code that initialises every mode at infinite phase velocity misses the sound-line extensions produced by the first mechanism, while its proximity-based branch sorting misinterprets the decoupled crossings produced by the second. The present framework succeeds because it treats both mechanisms explicitly, rather than attempting to repair their symptoms.

\paragraph{Physical and non-physical sectors are dynamically connected.}
A central lesson of the EP-migration analysis is that the distinction between physical and non-physical sectors is not a fixed property of a mode, but a loading-dependent attribute of the dispersion manifold. A conjugate EP pair that controls a vacuum avoided crossing may, under fluid loading, cross the causality or radiation boundary and pass into the non-physical sector. Once this happens, the corresponding physical branches are no longer connected by the EP, and the avoided crossing inherited from vacuum disappears. This dynamic passage across the admissibility boundary is the topological origin of the selective vanishing of narrow-gap veerings observed in the asymmetric-laminate computations. It also explains why the physical spectrum of a fluid-loaded plate cannot be regarded as a continuous perturbation of the vacuum spectrum: some vacuum connectivities are destroyed, while new connectivities arise from emergent modes.

\paragraph{Regime-dependent EP birth and the role of radiation loss.}
The conditional nature of EP formation under single-sided loading, established in \cref{sec:single_sided} and confirmed numerically in \cref{sec:val_single}, illustrates how radiation loss alters the codimension of spectral degeneracies. In the trapped regime the fluid impedance is real and the problem remains effectively Hermitian; any infinitesimal symmetry-breaking perturbation therefore unfolds a diabolic point into a nearby EP pair, in agreement with the von Neumann--Wigner theorem. In the leaky regime the impedance is complex and the spectrum is genuinely non-Hermitian; the simultaneous equality of two complex eigenvalues imposes an additional condition that is not automatically satisfied. This asymmetry, rooted in the analytic structure $Z\propto 1/k_z$, is not a peculiarity of the chosen laminate or fluid, but a generic consequence of geometric non-Hermiticity. It implies that symmetry arguments alone are insufficient to predict whether two interacting leaky modes will veer or cross.

\paragraph{Towards a general spectral topology of open waveguides.}
The two-mechanism picture developed here for fluid-loaded plates is expected to carry over to a broad class of open mechanical waveguides. Any system coupled to an unbounded radiative medium possesses a radiation sheet, an admissibility boundary, and a physical observation space; its spectrum is therefore subject to the same two mechanisms. In cylindrical shells, layered half-spaces, or acoustic-elastic continua, one should likewise expect vacuum cut-off branches to extend toward the radiation boundary under loading, and EP pairs to migrate across the physical-sector boundaries as the coupling parameter varies. The framework of this work provides the topological language---observation-space expansion, admissibility boundaries, EP migration---in which such predictions can be formulated and tested.

\section{Concluding remarks}
\label{sec:conclusion}

This work has established the exceptional-point mechanisms by which fluid loading reconstructs the guided-wave spectrum of an elastic plate. The central findings are threefold.

First, for real elastic moduli and real fluid parameters, exceptional points on the dispersion manifold occur as forward--backward conjugate pairs, and a physical-sector pair imprints a characteristic signature on the real frequency axis: real-wavenumber veering in both trapped and leaky sectors, with an additional imaginary-wavenumber crossing in the leaky sector. This signature provides a rigorous topological criterion for mode identification.

Second, fluid loading restructures the physical observation space through two independent mechanisms. The observation space expands continuously: branches that in vacuum exist only above a finite cut-off frequency can extend downward to the sound line, populating a frequency band in which the vacuum plate supports no propagating waves. Independently, exceptional points migrate under density variation; when a conjugate pair crosses an admissibility boundary and leaves the physical sector, the avoided crossing it controlled is extinguished, and the two branches decouple, crossing with uncorrelated imaginary parts. The narrowest avoided crossings vanish first, explaining the selective disappearance of small-gap veerings under weak loading.

Third, the birth of new exceptional points under mirror-symmetry breaking is sector-dependent. In the trapped regime, symmetry breaking generically produces EP pairs; in the leaky regime, the complex radiation loss elevates the codimension of the degeneracy, making EP formation conditional. This distinction is a direct consequence of the analytic structure of the fluid impedance and refines the conventional correspondence between symmetry breaking and avoided crossing in non-Hermitian systems.

These findings have been translated into a vacuum-anchored computational framework that combines density homotopy, real-frequency continuation, and topological validation. The framework recovers branches that conventional solvers systematically miss, including sound-line-terminating leaky branches and emergent Scholte waves, and it provides an internal consistency check against spurious crossings.

Several extensions follow naturally. The inclusion of viscoelastic damping or complex fluid properties would break the conjugate EP symmetry and require a broader symmetry classification. The possible coincidence of higher-order exceptional points with the impedance singular locus $k_z=0$ defines a boundary-value problem for the dispersion manifold that remains unexplored. Finally, the two-mechanism picture---observation-space expansion and EP migration across admissibility boundaries---should be tested in other open waveguides, such as cylindrical shells, layered media, and coupled acoustic-elastic continua, where it offers a unifying topological description of spectral reconstruction under radiation coupling.

\appendix
\section{Semi-analytical finite-element discretisation}
\label[appendix]{app:SAFE}

The plate is discretised along the thickness by $N_e$ Gauss--Lobatto--Legendre (GLL) spectral elements with $N_n$ unique nodal points. Three translational degrees of freedom $(q_x, q_y, q_z)$ are retained at each node, collected in the global displacement vector $\mathbf{q}\in\mathbb{R}^{3N_n}$. The plane-harmonic wave ansatz
\begin{equation}
    \mathbf{u}(x,z,t) = \mathbf{N}(z)\,\mathbf{q}\,e^{i(kx-\omega t)}
\end{equation}
is substituted into the principle of virtual work. Standard SAFE assembly \cite{gavric_computation_1995,castaings_finite_2008} yields the vacuum dynamic matrix
\begin{equation}
    \mathbf{D}_0(\omega,k) = \mathbf{K}_1 + ik\mathbf{K}_2 + k^2\mathbf{K}_3 - \omega^2\mathbf{M},
    \label{eq:D0_app}
\end{equation}
where $\mathbf{K}_1,\mathbf{K}_3,\mathbf{M}\in\mathbb{R}^{3N_n\times 3N_n}$ are real symmetric, and $\mathbf{K}_2\in\mathbb{R}^{3N_n\times 3N_n}$ is real skew-symmetric. The explicit entries are
\begin{align}
    \mathbf{K}_1 &= \int_{-h/2}^{h/2} \mathbf{N}'(z)^\mathsf{T} \mathbf{C}_{zz} \mathbf{N}'(z)\,dz, \\
    \mathbf{K}_2 &= \int_{-h/2}^{h/2} \bigl[\mathbf{N}(z)^\mathsf{T} \mathbf{C}_{xz} \mathbf{N}'(z) - \mathbf{N}'(z)^\mathsf{T} \mathbf{C}_{zx} \mathbf{N}(z)\bigr]\,dz, \\
    \mathbf{K}_3 &= \int_{-h/2}^{h/2} \mathbf{N}(z)^\mathsf{T} \mathbf{C}_{xx} \mathbf{N}(z)\,dz, \\
    \mathbf{M} &= \int_{-h/2}^{h/2} \rho_s\,\mathbf{N}(z)^\mathsf{T}\mathbf{N}(z)\,dz,
\end{align}
with $\mathbf{C}_{xx},\mathbf{C}_{zz},\mathbf{C}_{xz},\mathbf{C}_{zx}$ the appropriate sub-blocks of the Voigt stiffness matrix $\mathbf{C}$.

The fluid-loading matrix is obtained from the acoustic impedance \eqref{eq:impedance} via the virtual work of normal traction on the fluid--solid interfaces. Let $i_{\text{top}}$ and $i_{\text{bot}}$ denote the global indices of the normal ($z$) degree of freedom at the top and bottom surfaces. The projection matrix $\mathbf{H}\in\mathbb{R}^{2\times 3N_n}$ extracts these components:
\begin{equation}
    \mathbf{H}_{1,i_{\text{top}}} = 1, \quad \mathbf{H}_{2,i_{\text{bot}}} = 1, \quad \text{all other entries zero}.
\end{equation}
The impedance matrix is diagonal,
\begin{equation}
    \mathbf{Z}(\omega,k_z) = -\frac{i\rho_f\omega^2}{k_z}
    \begin{pmatrix}
        \delta_{\text{upper}} & 0 \\[2pt]
        0 & \delta_{\text{lower}}
    \end{pmatrix},
    \label{eq:Z_app}
\end{equation}
with $\delta_{\text{upper}},\delta_{\text{lower}}\in\{0,1\}$ indicating whether the respective surface is fluid-loaded. The full fluid-loaded dynamic matrix is then
\begin{equation}
    \mathbf{D}(\omega,k,k_z) = \mathbf{D}_0(\omega,k) + \mathbf{H}^\mathsf{T}\mathbf{Z}(\omega,k_z)\mathbf{H}.
    \label{eq:D_full_app}
\end{equation}

The properties used in \cref{sec:EP_topology} follow directly from \eqref{eq:D0_app}--\eqref{eq:D_full_app}: $\mathbf{K}_1^*=\mathbf{K}_1$, $\mathbf{K}_2^*=-\mathbf{K}_2$ (skew-symmetry), $\mathbf{K}_3^*=\mathbf{K}_3$, $\mathbf{M}^*=\mathbf{M}$, and $\mathbf{Z}(\omega^*,-k_z^*)=\mathbf{Z}(\omega,k_z)^*$ on the outgoing sheet, whence \eqref{eq:D_conjugate} holds for the discretised operator.

\section{Homotopy continuation and frequency continuation}
\label[appendix]{app:homotopy}

Both the density homotopy ($s$ as parameter) and the frequency continuation ($\omega$ as parameter) are implemented as predictor--corrector schemes applied to the same augmented system structure. The unknown vector is
\begin{equation}
    \mathbf{y} = [\mathbf{q}^\intercal,\, k,\, k_z]^\intercal \in \mathbb{C}^{n+2},
    \label{eq:y_vector_app}
\end{equation}
where $n=3N_n$ is the number of displacement degrees of freedom.

\subsection{Unified augmented system}
\label{app:augmented}

For a fixed parameter $\lambda$ (either $s$ or $\omega$), the square system $\mathbf{G}(\mathbf{y};\lambda)=\mathbf{0}$ of size $n+2$ reads
\begin{subequations}
\label{eq:augmented_app}
\begin{align}
    \mathbf{D}(\omega,k,k_z;\lambda)\,\mathbf{q} &= \mathbf{0}, \label{eq:aug_D_app} \\
    \mathcal{C}(\omega,k,k_z) &:= k_z^2 + k^2 - \frac{\omega^2}{c_f^2} = 0, \label{eq:aug_C_app} \\
    \mathcal{N}(\mathbf{q}) &:= \mathbf{q}_{\mathrm{ref}}^{\dagger} \mathbf{q} - 1 = 0, \label{eq:aug_N_app}
\end{align}
\end{subequations}
where $\mathbf{q}_{\mathrm{ref}}$ is a constant reference vector (typically the eigenvector from the previous converged point) that removes the eigenvector phase freedom. The dynamic matrix takes the form
\begin{equation}
    \mathbf{D}(\omega,k,k_z;\lambda) = \mathbf{K}_1 + i k\mathbf{K}_2 + k^2\mathbf{K}_3
    + s(\lambda)\,\mathbf{H}^\mathsf{T}\mathbf{Z}(\omega,k_z)\mathbf{H} - \omega(\lambda)^2\mathbf{M},
    \label{eq:D_lambda_app}
\end{equation}
with the understanding that $s(\lambda)=\lambda$ and $\omega(\lambda)=\omega_{\rm fix}$ during homotopy, whereas $s(\lambda)=1$ and $\omega(\lambda)=\lambda$ during frequency continuation.

\subsection{Jacobian and parameter derivatives}
\label{app:jacobian}

The Jacobian with respect to the unknowns $\mathbf{y}$ is the $(n+2)\times(n+2)$ block matrix
\begin{equation}
    \mathbf{G}_{\mathbf{y}}(\mathbf{y};\lambda) =
    \begin{bmatrix}
        \mathbf{D} & \dfrac{\partial\mathbf{D}}{\partial k}\,\mathbf{q} & \dfrac{\partial\mathbf{D}}{\partial k_z}\,\mathbf{q} \\[8pt]
        \mathbf{0}^\intercal & 2k & 2k_z \\[4pt]
        \mathbf{q}_{\mathrm{ref}}^{\dagger} & 0 & 0
    \end{bmatrix},
    \label{eq:G_y_app}
\end{equation}
where
\begin{align}
    \frac{\partial\mathbf{D}}{\partial k} &= i\mathbf{K}_2 + 2k\mathbf{K}_3, \label{eq:D_k_app} \\[4pt]
    \frac{\partial\mathbf{D}}{\partial k_z} &= s(\lambda)\,\mathbf{H}^\mathsf{T}\,\frac{\partial\mathbf{Z}}{\partial k_z}\,\mathbf{H}, \qquad
    \frac{\partial\mathbf{Z}}{\partial k_z} = -\frac{1}{k_z}\,\mathbf{Z}. \label{eq:D_kz_app}
\end{align}

The parameter derivatives are
\begin{equation}
    \frac{\partial\mathbf{G}}{\partial s} =
    \begin{bmatrix}
        \mathbf{H}^\mathsf{T} \mathbf{Z}(\omega,k_z) \mathbf{H} \,\mathbf{q} \\[4pt] 0 \\[4pt] 0
    \end{bmatrix},
    \qquad
    \frac{\partial\mathbf{G}}{\partial\omega} =
    \begin{bmatrix}
        \dfrac{\partial\mathbf{D}}{\partial\omega}\,\mathbf{q} \\[8pt]
        -\dfrac{2\omega}{c_f^2} \\[6pt]
        0
    \end{bmatrix},
    \label{eq:G_param_app}
\end{equation}
with
\begin{equation}
    \frac{\partial\mathbf{D}}{\partial\omega} = \mathbf{H}^\mathsf{T}\frac{\partial\mathbf{Z}}{\partial\omega}\mathbf{H} - 2\omega\mathbf{M},
    \qquad
    \frac{\partial\mathbf{Z}}{\partial\omega} = \frac{2}{\omega}\,\mathbf{Z}.
    \label{eq:dZdw_app}
\end{equation}
In \eqref{eq:dZdw_app}, $k_z$ is treated as an independent variable (consistent with the augmented formulation), so the derivative of $Z\propto\omega^2/k_z$ with respect to $\omega$ contains only the explicit dependence.

\subsection{Homotopy continuation (density parameter $s$)}
\label{app:homotopy_detail}

The homotopy parameter is parametrised by a complex path
\begin{equation}
    s(t) = t + i\,\gamma\,t(1-t), \qquad t\in[0,1],
    \label{eq:complex_path_app}
\end{equation}
with $\gamma\in\mathbb{C}$. The path starts with $\gamma=0$ (real interval). If the tangent predictor or the Newton corrector fails to converge, $\gamma$ is incremented by $0.1i$ and the path is restarted from $t=0$. Since EPs are isolated in the complex $(s,k)$ plane, one or two restarts typically suffice to reach $s=1$.

\paragraph{Tangent predictor.} The tangent vector $\dot{\mathbf{y}} = d\mathbf{y}/dt$ is obtained from
\begin{equation}
    \mathbf{G}_{\mathbf{y}}(\mathbf{y}_j; s(t_j)) \, \dot{\mathbf{y}}_j = -\,\mathbf{G}_s(\mathbf{y}_j; s(t_j)) \, s'(t_j),
    \label{eq:tangent_homotopy_app}
\end{equation}
with $s'(t)=1+i\gamma(1-2t)$. The predictor step is
\begin{equation}
    \mathbf{y}_{j+1}^{(0)} = \mathbf{y}_j + \Delta t\,\dot{\mathbf{y}}_j.
    \label{eq:predictor_app}
\end{equation}

\paragraph{Newton corrector.} The corrector iterates
\begin{equation}
    \mathbf{G}_{\mathbf{y}}(\mathbf{y}_{j+1}^{(m)}; s(t_{j+1})) \, \delta\mathbf{y}^{(m)} = -\,\mathbf{G}(\mathbf{y}_{j+1}^{(m)}; s(t_{j+1})),
    \qquad
    \mathbf{y}_{j+1}^{(m+1)} = \mathbf{y}_{j+1}^{(m)} + \delta\mathbf{y}^{(m)},
    \label{eq:newton_app}
\end{equation}
until $\|\delta\mathbf{y}^{(m)}\|/\|\mathbf{y}_{j+1}^{(m)}\| < \varepsilon_{\rm tol}$ (typically $\varepsilon_{\rm tol}=10^{-6}$). The step size $\Delta t$ is halved upon corrector failure and doubled after two consecutive successful steps, bounded by $\Delta t_{\max}=0.1$.

\paragraph{Algorithmic parameters.} The nominal settings used in this work are: initial step size $\Delta t_0=0.01$, minimum step size $\Delta t_{\min}=10^{-8}$, maximum predictor--corrector iterations $N_{\rm max}=20$, and complex-path increment $\Delta\gamma=0.1i$.

\subsection{Frequency continuation (parameter $\omega$)}
\label{app:freq_cont}

At $s=1$, the dispersion branch is traced along the real frequency axis with $\omega$ as the continuation parameter. The tangent system is
\begin{equation}
    \mathbf{G}_{\mathbf{y}}(\mathbf{y}_j; \omega_j) \, \dot{\mathbf{y}}_j = -\,\frac{\partial\mathbf{G}}{\partial\omega}(\mathbf{y}_j; \omega_j),
    \label{eq:tangent_omega_app}
\end{equation}
and the predictor--corrector structure is identical to \eqref{eq:predictor_app}--\eqref{eq:newton_app}, with $\omega$ replacing $s(t)$.

\paragraph{Adaptive step size.} The step size $\Delta\omega$ is controlled by two independent criteria:
\begin{enumerate}
    \item \emph{Corrector convergence:} $\Delta\omega$ is increased by factor $1.5$ (up to $\Delta\omega_{\max}$) after a step that converges in $N_{\rm iter}<5$ iterations; it is halved when $N_{\rm iter}>15$ or the corrector diverges.
    \item \emph{Eigenvector rotation:} The Modal Assurance Criterion between successive eigenvectors,
    \begin{equation}
        \mathrm{MAC} = \frac{|\mathbf{q}_j^{\dagger} \mathbf{q}_{j+1}|^2}{\|\mathbf{q}_j\|^2 \|\mathbf{q}_{j+1}\|^2},
        \label{eq:MAC_app}
    \end{equation}
    is monitored. When $\mathrm{MAC}<\mathrm{MAC}_{\min}$ (nominal value $0.99$), the eigenvector is rotating rapidly, signalling proximity to an underlying exceptional point in the complex plane; $\Delta\omega$ is halved. 
\end{enumerate}
The nominal settings are: $\Delta\omega_0=10^{-3}\,\omega_{\rm ref}$ (with $\omega_{\rm ref}$ a characteristic plate frequency), $\Delta\omega_{\min}=10^{-8}\,\omega_{\rm ref}$, $\Delta\omega_{\max}=10^{-1}\,\omega_{\rm ref}$, and $\varepsilon_{\rm tol}=10^{-6}$.

\paragraph{Topological validation.} After continuation, every avoided crossing is checked against the signature of \cref{prop:EPpair}. If a suspicious crossing is detected (e.g., a real-part crossing without the required imaginary-part exchange in the supersonic regime), the local interval is re-traced with $\mathrm{MAC}_{\min}=0.999$ to resolve the veering accurately.

\section*{CRediT authorship contribution statement}
\textbf{Dong Xiao}: Conceptualization, Methodology, Software, Data curation, Formal analysis, Writing - Original draft preparation, Writing - Review Editing, Visualization. \textbf{Zahra Sharif-Khodaei}: Supervision, Writing - Review Editing. \textbf{M.H. Aliabadi}: Supervision, Writing - Review Editing.

\section*{Declaration of competing interest}
The authors declare that they have no known competing financial interests or personal relationships that could have appeared to influence the work reported in this paper.

\section*{Acknowledgements}
The first author acknowledges the financial support from the K. C. Wong Postdoctoral Fellowship, funded by the K. C. Wong Education Foundation.

\section*{Data availability}
The source code and data supporting this study will be made publicly available upon publication at \url{https://github.com/dongxiao96/TopoDisper}. 


\setstretch{1.0}
\small{

}

\end{document}